\documentclass[
longbibliography,
reprint,
amsmath,amssymb,
aps,
pra,
superscriptaddress,
nofootinbib
]{revtex4-2}
\usepackage{graphicx}
\usepackage{amsmath,amssymb,amsthm}
\usepackage{bm}
\usepackage{physics}
\usepackage[colorlinks,linkcolor=blue,citecolor=blue]{hyperref}
\usepackage{mathtools}
\theoremstyle{plain}
\newtheorem{theorem}{Theorem}
\newtheorem{lemma}[theorem]{Lemma}
\newtheorem{proposition}[theorem]{Proposition}
\newcommand{\id}{\mathrm{id}}
\DeclareMathOperator*{\argmax}{arg~max}
\newcommand{\wt}{\mathrm{wt}}
\newcommand{\ii}{\mathrm{i}}
\newcommand{\ff}{\mathbb{F}_2}
\newcommand{\ee}{\mathbf{e}}
\newcommand{\sss}{\mathbf{s}}
\newcommand{\ccc}{\mathbf{c}}
\newcommand{\uu}{\mathbf{u}}
\newcommand{\vv}{\mathbf{v}}
\newcommand{\cc}{\mathbb{C}}
\newcommand{\yy}{\mathbf{y}}
\newcommand{\xx}{\mathbf{x}}
\newcommand{\bb}{\mathbf{b}}
\newcommand{\bmalpha}{\bm{\alpha}}
\newcommand{\bmbeta}{\bm{\beta}}
\newcommand{\bmmu}{\bm{\mu}}
\newcommand{\one}{\mathbf{1}}
\newcommand{\onegamma}{\mathbf{1}_{\gamma}}
\newcommand{\bmxi}{\bm{\xi}}
\newcommand{\pauli}{\mathcal{P}}
\newcommand{\calH}{\mathcal{H}}
\newcommand{\calG}{\mathcal{G}}

\begin{document}

\title{
Proof of a positive coherent-error threshold for topological quantum codes
}

\author{Shiro Tamiya}
\email{shiro.tamiya01@gmail.com}
\affiliation{Photon Science Center, Graduate School of Engineering, The University of Tokyo, 7-3-1 Hongo, Bunkyo-ku, Tokyo 113-8656, Japan}
\affiliation{Department of Applied Physics, Graduate School of Engineering, The University of Tokyo, 7-3-1 Hongo, Bunkyo-ku, Tokyo 113-8656, Japan}

\author{Masato Koashi}
\email{koashi@qi.t.u-tokyo.ac.jp}
\affiliation{Photon Science Center, Graduate School of Engineering, The University of Tokyo, 7-3-1 Hongo, Bunkyo-ku, Tokyo 113-8656, Japan}
\affiliation{Department of Applied Physics, Graduate School of Engineering, The University of Tokyo, 7-3-1 Hongo, Bunkyo-ku, Tokyo 113-8656, Japan}

\begin{abstract}
Threshold analyses of quantum error-correcting codes are well established for stochastic error models, in which errors occur randomly with given probabilities.
However, errors in actual devices can also be coherent, such as unwanted $Z$ rotations due to imperfect control, which are not captured by stochastic error models.
For the surface code, numerical studies have indicated threshold behavior even under coherent errors, but a rigorous proof of threshold existence is lacking.
Here we prove that a positive threshold for coherent $Z$-rotation errors exists for quantum low-density parity-check codes with a bounded number of logical qubits, including the surface code and other topological codes.
Specifically, we show that the maximum-likelihood Pauli recovery suppresses the entanglement infidelity exponentially in the code distance up to a prefactor linear in the number of physical qubits whenever the rotation angles lie below a constant that is independent of the code size.
The proof combines Fourier analysis to retain the interference among the amplitudes of coherent errors with the cluster expansion of abstract polymer models.
Our results expand the theoretical foundation of quantum error correction and offer a statistical-mechanical description of quantum error correction beyond stochastic errors.
\end{abstract}
\maketitle

\section{Introduction}

Quantum error correction (QEC) protects quantum information by encoding
logical qubits into a larger Hilbert space of many physical qubits, and constitutes a key ingredient of fault-tolerant quantum computation (FTQC)~\cite{Shor1995Scheme,Steane1996Simple,Knill2000Theory,Terhal2015Quantum}.
The central guarantee of QEC is the existence of an error-correcting threshold, which states that the logical error rate can be suppressed to an arbitrarily small level by growing the code distance, provided the strength of the error is below a nonzero threshold value~\cite{Dennis_2002, Terhal2015Quantum}.
Threshold analyses have been developed primarily for stochastic error models, in which errors occur randomly with given probabilities~\cite{Dennis_2002,Fowler2012Proof,yoshida2026proof,Kubica_2019}.

Errors in quantum devices, however, are not necessarily stochastic.
A prominent class of errors beyond stochastic error models consists of coherent errors caused by imperfect control and miscalibration~\cite{Greenbaum2018Modeling,Kueng2016Comparing, Bravyi2018Correcting}.
In this work, we consider a product of single-qubit $Z$ rotation errors on all physical qubits as a coherent error model for QEC~\cite{Bravyi2018Correcting,VennBeri2020Error}.
Each single-qubit rotation decomposes as $e^{\ii \theta Z}=\cos(\theta)I+\ii \sin(\theta)Z$, so expanding the product over the $n$ physical qubits expresses the error as a sum of exponentially many terms, one for each pattern of phase flips on the qubits.
The coefficient of each term is a complex number whose absolute value and phase are both fixed by the rotation angles.
We call these coefficients the amplitudes of the corresponding error patterns.

For stochastic errors, the probability of each outcome, that is, of each syndrome and each effect on the encoded information, is the sum of the probabilities of the corresponding patterns.
These probabilities are nonnegative real numbers, so they cannot cancel, and each pattern contributes to the sum independently of the other patterns.
In contrast, under coherent errors, the amplitudes of distinct patterns that produce the same syndrome and the same effect on the encoded information are first added, with the squared absolute value of the sum giving the probability of that outcome.
The difference between the squared absolute value of the sum and the sum of the squared absolute values of the amplitudes, which can have either sign, indicates the effect of the interference: the patterns can partially cancel or reinforce one another depending on their relative phases~\cite{Bravyi2018Correcting,Behrends2025TheSurface,Iverson2020Coherence}.
The codespace of an error-correcting code determines which amplitudes combine and preserving the cancellations among the amplitudes is therefore required for the threshold analysis under coherent errors.

One way to sidestep this requirement is to remove the interference.
Coherent errors can be transformed into effectively stochastic Pauli errors through random Pauli operations in twirling~\cite{Geller2013Efficient,WallmanEmerson2016Noise} or through the random measurement outcomes intrinsic to teleportation-based error correction~\cite{Chang2025Taming}. 
Whether QEC can nevertheless protect quantum information against coherent errors directly, without making coherent errors effectively stochastic, is a fundamental question for the theory of QEC.

We study this question for topological stabilizer codes defined on geometrically local lattices~\cite{Bravyi1998surface,Bombin2007Optimal,Tomita2014LowDistance, Bravyi2018Correcting,Fujii2015Topological,Kitaev2003anyons, Dennis_2002,Bombin2007Topological,Bombin2006Topological}.
The geometric locality eases the implementation of QEC, so topological codes, and the surface code in particular, are among the leading candidates for realizing QEC.
These codes have been demonstrated in experiments~\cite{Krinner_2022,Youwei2022,Bluvstein2025,Google2023Suppressing,Google2024Quantum,Berthusen_2024,Lacroix_2025,Sales_Rodriguez_2025}.
Numerical studies indicate that there is a nonzero threshold under coherent errors for the surface code and other stabilizer codes~\cite{Suzuki2017Efficient,Bravyi2018Correcting,VennBeri2020Error,Venn2023Coherent,MartonAsboth2023Coherent,BaoAnand2024Phases,Behrends2025TheSurface,Pato2025Logical,Behrends_2025,yan2026nonlinearsigmamodelsurface}.
Analytical studies also indicate that QEC suppresses coherent errors, but for topological codes the bound has code-dependent prefactors~\cite{Huang2019Performance} that may grow with the code size or require the rotation angle to vanish as the code size grows~\cite{Iverson2020Coherence}.
Thus, to our knowledge, a proof of a nonzero coherent-error threshold at a fixed rotation angle has been lacking for the surface code and other topological codes, even when the syndrome measurement is noiseless.
This leaves the mechanism of coherent-error suppression not fully understood.

In this work, we prove the existence of a positive threshold for coherent $Z$-rotation errors in the code-capacity setting, in which the errors act only on the physical qubits and the syndrome measurement is performed in a noiseless way.
After the syndrome is measured, we apply the maximum-likelihood (ML) Pauli recovery~\cite{BaoAnand2024Phases,Darmawan_2017,Darmawan2024Low,Bravyi_2014}.
The codes are Calderbank–Shor–Steane (CSS) quantum low-density parity-check (QLDPC) codes with a bounded number of logical qubits, in which every stabilizer generator acts on a bounded number of physical qubits and every physical qubit belongs to a bounded number of stabilizer generators~\cite{MacKay2004Sparse,Breuckmann2021Quantum}.
 This class includes the surface code and other topological codes.
 For such a family, we prove that the entanglement infidelity of the recovered logical information is suppressed exponentially in the code distance, up to a prefactor linear in the number of physical qubits, whenever the absolute values of the rotation angles are below a constant that is independent of the code size.
 If the code distance grows superlogarithmically in the number of physical qubits, the infidelity vanishes as the code size grows, and the constant is a rigorous lower bound on the threshold.

The main obstacle to such a proof is that replacing the amplitudes of the error patterns by absolute values, which discards the cancellations among the amplitudes due to interference, may give a pessimistic bound on the infidelity.
We instead bound the infidelity using pairwise products of amplitude sums over error patterns that share a syndrome but have different effects on the encoded information, and use Fourier analysis~\cite{ODonnell2014Analysis} to represent these products through partition functions of an abstract polymer model, whose polymers are connected subsets of qubits that produce a trivial syndrome.
The cluster expansion~\cite{Koteck__1986_polymer,friedli_velenik_2017_polymer,Helmuth_2019,Mann_2024cluster} then expresses the logarithm of each partition function as a sum over clusters, i.e., connected groups of polymers, which converges for QLDPC codes whenever the absolute values of the rotation angles are below a constant that is independent of the code size.
To bound the infidelity, we compare the sums over clusters for two codewords, which differ only in the contributions from clusters with a nontrivial effect on the encoded information.
The number of qubits in every such cluster is at least the code distance, which gives the exponential bound on the infidelity stated above.
Fourier analysis and the cluster expansion have both been used in classical coding theory~\cite{Forney2011CodesOnGraphs,Kudekar2011Decay,Macris2012Polymer}.

Consequently, we establish a coherent-error threshold in the code-capacity setting for topological quantum codes, thereby narrowing the gap between the provable threshold and the numerically estimated value to a quantitative, rather than a qualitative, problem.
Our contribution is the proof that the threshold is nonzero with an explicit lower bound, and the understanding this proof provides regarding how coherent errors are suppressed.
Our results and proof technique may serve as a key step toward a threshold theorem for FTQC with topological codes that handles circuit-level coherent errors induced by faulty syndrome measurement~\cite{Fowler2012Proof,Gottesman2014constant,Tamiya2026Fault,Christandl2025Fault}.
Beyond its relevance to FTQC, our approach may also offer a statistical-mechanical perspective on why coherent errors are suppressed.
The correspondence between error-correcting thresholds and phase transitions in disordered classical spin models has been investigated mainly for stochastic Pauli errors~\cite{Dennis_2002,Wang_2003,ChubbFlammia2021Statistical,Kubica2018Three}, and we leave the connection to our polymer representation for future work.

The rest of this paper is organized as follows. 
In Sec.~\ref{sec:preliminaries}, we fix the notation and introduce stabilizer codes, CSS codes, the syndrome subspaces of CSS codes, QLDPC codes, the entanglement fidelity, and Fourier analysis on binary vector spaces. 
In Sec.~\ref{sec:setting}, we define the coherent $Z$-rotation error model and the ML Pauli recovery. In Sec.~\ref{sec:main-results} we state the main theorem and discuss its consequences. 
In Sec.~\ref{sec:fidelity-ml}, we derive an exact expression for the entanglement fidelity under the ML recovery by decomposing the coherent-error amplitudes according to syndrome and logical class, that is, the effect on the encoded information.
In Sec.~\ref{sec:fourier-rep}, we use Fourier analysis to reduce the resulting infidelity estimate to a Fourier $1$-norm bound that retains the interference among the amplitudes.
In Sec.~\ref{sec:polymer}, we reformulate this Fourier representation as an abstract polymer model and derive the estimates needed for the cluster expansion.
In Sec.~\ref{sec:cluster}, we combine the cluster expansion with the code-distance constraint to obtain the exponential bound and complete the proof of the main theorem.
In Sec.~\ref{sec:conclusion}, we conclude with a summary and future directions. 

\section{Preliminaries}
\label{sec:preliminaries}
In this section, we fix notation and definitions.
For each positive integer $n$, define $[n]\coloneqq \{1,\ldots,n\}$.
Let $\mathbb{R}$ and $\mathbb{C}$ be the sets of real and complex numbers, respectively.
For a condition $J$, let $\bm 1[J]$ equal one if $J$ holds and zero otherwise.
The set of linear operators on a Hilbert space $\calH$ is denoted by $\mathcal{L}(\calH)$.
For an operator $A$ on $\calH$, $A^\dagger$ denotes its adjoint, and $\Tr A$ its trace.
In Secs.~\ref{subsec:prelim-stabilizer} and~\ref{subsec:prelim-css}, we give the notation and the definitions of stabilizer codes, CSS codes, and QLDPC codes.
In Sec.~\ref{subsec:prelim-syndrome-bases}, we introduce the syndrome subspaces of CSS codes.
In Sec.~\ref{subsec:prelim-fidelity}, we introduce the entanglement fidelity.
In Sec.~\ref{subsec:prelim-fourier}, we introduce Fourier analysis over binary vector spaces.
\subsection{Stabilizer codes}
\label{subsec:prelim-stabilizer}
Let $\mathbb{F}_2$ be the finite field with elements $\{0,1\}$, and let $\mathbb{F}_2^n$ be the $n$-dimensional row vector space over $\mathbb{F}_2$.
For $\mathbf{x},\mathbf{y}\in\ff^n$, define $\mathbf{x}\cdot\mathbf{y}\coloneqq\sum_{i=1}^n x_iy_i\bmod 2$.
For a row vector $\mathbf{x}\in\mathbb{F}_2^n$, define the support of $\xx$ by
\begin{equation}
\label{eq:support-def}
    \mathrm{supp}(\xx)\coloneqq \{j\in[n]\colon x_j=1\},
\end{equation}
and define its Hamming weight by
\begin{equation}
    \wt(\mathbf{x})\coloneqq |\mathrm{supp}(\xx)|.
\label{eq:hamming weight}
\end{equation}

Let $\{\ket{0},\ket{1}\}$ be the computational basis for the single-qubit Hilbert space $\mathbb{C}^2$, and let $\qty(\mathbb{C}^2)^{\otimes n}$ be the $n$-qubit Hilbert space.
For $\bb=\qty(b_1,\ldots,b_n)\in\ff^n$, define the computational-basis state
\begin{equation}
    \ket{\mathbf b}\coloneqq\bigotimes_{j=1}^n\ket{b_j}.
\end{equation}
Let $X,Y$, and $Z$ be the single-qubit Pauli operators defined by $X\coloneqq \ketbra{0}{1}+\ketbra{1}{0}$, $Y\coloneqq \ii \ketbra{1}{0}-\ii \ketbra{0}{1}$, and $Z\coloneqq \ketbra{0}{0}-\ketbra{1}{1}$, where $\ii\coloneqq \sqrt{-1}$.
Let $I\coloneqq \ketbra{0}{0}+\ketbra{1}{1}$ be the identity operator.
The Pauli group on $n$ qubits, denoted by $\pauli_n$, is the set of operators $P=\alpha\bigotimes_{i=1}^n P_i$, where $\alpha\in \{\pm1, \pm\ii\}$ and $P_i\in\{I, X,Y,Z\}$.

A \textit{stabilizer} $\mathcal{S}$ is an abelian subgroup of $\mathcal{P}_n$ not containing $-I^{\otimes n}$.
A \textit{stabilizer code} $\mathcal{Q}$ is a linear subspace of the $n$-qubit Hilbert space given by the common $+1$-eigenspace for all elements in $\mathcal{S}$, i.e., $\mathcal{Q}\coloneqq \{\ket{\psi}\colon s\ket{\psi}=\ket{\psi}~\text{for all~$s\in\mathcal{S}$}\}$, also called the \textit{codespace}~\cite{Gottesman1997Stabilizer, gottesman2009introductionquantumerrorcorrection}.
If $\mathcal{S}$ is generated by $n-k$ independent elements, referred to as \textit{stabilizer generators}, the stabilizer code $\mathcal{Q}$ has dimension $2^k$.
We say the code $\mathcal{Q}$ encodes $k$ logical qubits.

The normalizer of $\mathcal{S}$, denoted $\mathcal{N}(\mathcal{S})$, is given by $\mathcal{N}(\mathcal{S})\coloneqq \{P\in\mathcal{P}_n\colon P\mathcal{S}P^{-1}=\mathcal{S}\}$, where $P\mathcal{S}P^{-1}\coloneqq \{PsP^{-1}\colon s\in\mathcal{S}\}$.
The set of elements that commute with every element of $\mathcal{P}_n$ is called the center of $\mathcal{P}_n$, denoted $\mathcal{Z}\coloneqq \{\alpha I^{\otimes n}\colon\alpha\in\{\pm1, \pm \mathrm{i}\}\}$.
The logical Pauli group is defined as the quotient group $\mathcal{N}(\mathcal{S})/(\mathcal{S}\cdot \mathcal{Z})$, with $\mathcal{S}\cdot \mathcal{Z}=\{sz\colon s\in\mathcal{S}, z\in\mathcal{Z}\}$.

An operator $L\in\mathcal{N}(\mathcal{S})$ is referred to as a \textit{logical operator}, and is called \textit{nontrivial} if its coset $[L]$ is nontrivial, i.e., $[L]\neq [I]$.
The code distance $d$ is given by $d\coloneqq \min\{|L|\colon L\in \mathcal{N}(\mathcal{S})\setminus(\mathcal{S}\cdot \mathcal{Z})\}$, where the weight $|P|$ of a Pauli operator $P=\alpha\bigotimes_{i=1}^n P_i$ denotes the number of physical qubits on which it acts nontrivially, i.e., $|P|\coloneqq\abs{\{i\in\{1,\ldots, n\}\colon P_i\neq I\}}$.
A stabilizer code with $n$ physical qubits, $k$ logical qubits, and code distance $d$ is referred to as an $[[n, k, d]]$ code.
Throughout this paper, we refer to the $n$ physical qubits constituting the stabilizer code as \emph{data qubits}.

\subsection{CSS codes and QLDPC codes}
\label{subsec:prelim-css}
A \textit{Calderbank-Shor-Steane (CSS) code} is a stabilizer code whose stabilizer generators are composed entirely of tensor products of $I$ and $X$ ($X$-type generators) or $I$ and $Z$ ($Z$-type generators)~\cite{Calderbank_1996, Steane1996,Steane1996Simple}.
For binary row vectors $\mathbf{a}, \mathbf{b}\in\ff^n$, we define
\begin{equation}
    X(\mathbf{a})\coloneqq \prod_{j=1}^n X^{a_j}_j,\quad Z({\mathbf{b}})\coloneqq \prod_{j=1}^n Z^{b_j}_j,
\end{equation}
where $X_j^0=Z_j^0\coloneqq I_j$.

Fix linearly independent $X$-type stabilizer generators $g_X^{(i)}\coloneqq X({\mathbf{h}_X^{(i)}})$ for $i\in[m_X]$ and linearly independent $Z$-type generators $g_Z^{(i)}\coloneqq Z({\mathbf{h}_Z^{(i)}})$ for $i\in[m_Z]$, specified by binary row vectors $\mathbf{h}_X^{(i)},\mathbf{h}_Z^{(i)}\in\ff^n$.
Collecting these vectors as rows gives binary parity-check matrices $H_X\in\mathbb{F}_2^{m_X\times n}$ and $H_Z\in\mathbb{F}_2^{m_Z\times n}$, whose $i$-th rows are $\mathbf{h}_X^{(i)}$ and $\mathbf{h}_Z^{(i)}$, both of full row rank.
This causes no loss of generality, because discarding linearly dependent rows leaves the code unchanged and does not increase the row or column weights.
The stabilizer of the CSS code is generated by these $m_X+m_Z$ operators, $\mathcal{S}=\langle g_X^{(1)},\ldots,g_X^{(m_X)},g_Z^{(1)},\ldots,g_Z^{(m_Z)}\rangle$.
The number of logical qubits is
\begin{equation}
    \label{eq:num-logical-def}
    k=n-m_X-m_Z,
\end{equation}
and the dimension of $\mathcal{Q}$ is $2^k$.
The condition that all generators commute is equivalent to $H_ZH_X^\top=0$.

Only $Z$-type errors are considered in this paper, so we define the syndrome and the logical class for $Z$-type errors as follows.
The \textit{syndrome} $\sss\in\ff^{m_X}$ of the $Z$-type error $Z({\mathbf{e}})$, obtained by measuring the $X$-type stabilizer generators, is given by
\begin{equation}
\sss\coloneqq\mathbf{e}H_X^\top \in\ff^{m_X}.    
\end{equation}
We choose and fix binary matrices $O_X,O_Z\in\ff^{k\times n}$ whose rows represent independent logical $X$ and logical $Z$ operators, respectively~\cite{Gottesman1997Stabilizer}.
The $(i,j)$-th entry of $O_X$ is one when the $i$-th logical $X$ operator acts by $X$ on data qubit $j$, and $O_Z$ is defined in the same way from the logical $Z$ operators.
Since the logical operators commute with stabilizer generators and are chosen so that the $i$-th logical $X$ operator anticommutes with the $i$-th logical $Z$ operator and commutes with the others, the matrices satisfy the conditions
\begin{align}
&O_ZH_X^{\top}=0,\\
&O_XH_Z^{\top} = 0,\\
&O_X O_Z^\top=I_k\label{eq:logical-O},
\end{align}
where $I_k$ denotes the $k\times k$ identity matrix.
For $\ee\in\ff^n$, the \textit{logical class} of the $Z$-type error $Z(\ee)$ is given by
\begin{equation}
\label{eq:logical-class-def}
\uu\coloneqq\ee O_X^\top\in\ff^k.
\end{equation}

The \textit{logical $Z$ basis} of $\mathcal Q$ is the orthonormal basis $\{\ket{\overline{\bmalpha}}_Z\}_{\bmalpha\in\ff^k}$, where
\begin{align}
\label{eq:logical-Z-basis}
\ket{\overline{\bmalpha}}_Z
&\coloneqq 2^{-m_X/2}\sum_{\yy\in\ff^{m_X}}
\ket{\bmalpha O_X+\yy H_X}.
\end{align}
These $2^k$ states are the \textit{codewords} of the CSS code~\cite{Gottesman1997Stabilizer,gottesman2009introductionquantumerrorcorrection}.
A $Z$-type operator $Z(\ee)$ is a logical $Z$ operator if and only if $\ee H_X^\top=\bm0$.
For $\uu\in\ff^k$, define the logical $Z$ operator representing $\uu$ by
\begin{equation}
    \overline{Z}(\uu)\coloneqq Z(\uu O_Z).
\label{eq:Zbar-def}
\end{equation}
The operator $\overline Z(\uu)$ acts on the logical $Z$ basis as
\begin{equation}
\label{eq:logical-Z-codeword-action}
    \overline Z(\uu)\ket{\overline{\bmalpha}}_Z
    =(-1)^{\uu\cdot\bmalpha}\ket{\overline{\bmalpha}}_Z.
\end{equation}
The \textit{logical $X$ basis} of $\mathcal Q$ is the orthonormal basis $\{\ket{\overline{\mathbf p}}_X\}_{\mathbf p\in\ff^k}$, where
\begin{equation}
\label{eq:logical-X-codeword-basis}
    \ket{\overline{\mathbf p}}_X
    \coloneqq 2^{-k/2}\sum_{\bmalpha\in\ff^k}
    (-1)^{\bmalpha\cdot\mathbf p}\ket{\overline{\bmalpha}}_Z.
\end{equation}

The vectors $\ee$ for which $Z(\ee)$ is a logical $Z$ operator are
\begin{align}
\label{eq:kerHX-decomposition}
&\{\ee\in\ff^n\colon\ee H_X^\top=\bm0\}\nonumber\\
&\quad=\{\uu O_Z+\mathbf aH_Z\colon\uu\in\ff^k,\ \mathbf a\in\ff^{m_Z}\},
\end{align}
where $\uu O_Z+\mathbf aH_Z$ has logical class $\uu$.
Thus $Z(\ee)$ with $\ee H_X^\top=\bm0$ is in $\mathcal S$ if and only if $\ee O_X^\top=\bm0$, and the minimum weight $d_Z$ of a nontrivial logical $Z$ operator is
\begin{equation}
\label{eq:dz-def}
    d_Z=\min\{\wt(\mathbf{e})\colon \mathbf{e}H_X^\top=0,~~\mathbf{e}O_X^\top\neq 0\}.
\end{equation}
Since the code distance $d$ is the minimum weight of a nontrivial logical Pauli operator, $d\le d_Z$.

Finally, a family of CSS codes is called $(r,c)$-\textit{QLDPC} if, for every code in the family, the row and column weights of the parity-check matrices $H_X$ and $H_Z$ are bounded by constants $r$ and $c$, respectively.

\subsection{Syndrome subspaces}
\label{subsec:prelim-syndrome-bases}
For $\sss\in\ff^{m_X}$ and $\bmalpha\in\ff^k$, the state
\begin{equation}
\label{eq:syndrome-logical-basis}
\ket{\sss,\bmalpha}_Z
\coloneqq 2^{-m_X/2}\sum_{\yy\in\ff^{m_X}}(-1)^{\sss\cdot\yy}
\ket{\bmalpha O_X+\yy H_X}
\end{equation}
has the same computational-basis components as the codeword $\ket{\overline{\bmalpha}}_Z=\ket{\bm0,\bmalpha}_Z$, with the signs $(-1)^{\sss\cdot\yy}$, and satisfies
\begin{align}
 g_X^{(i)}\ket{\sss,\bmalpha}_Z&=(-1)^{s_i}\ket{\sss,\bmalpha}_Z,\\
  g_Z^{(j)}\ket{\sss,\bmalpha}_Z&=\ket{\sss,\bmalpha}_Z,\\
 \overline Z(\uu)\ket{\sss,\bmalpha}_Z&=(-1)^{\uu\cdot\bmalpha}\ket{\sss,\bmalpha}_Z.
\label{eq:logical-basis-action}
\end{align}
The states $\ket{\sss,\bmalpha}_Z$, over all $(\sss,\bmalpha)$, form an orthonormal basis of the subspace with eigenvalue $+1$ of every $g_Z^{(j)}$.

For each $\sss$, define the projector onto the subspace with eigenvalue $(-1)^{s_i}$ of every $g_X^{(i)}$ and eigenvalue $+1$ of every $g_Z^{(j)}$ by
\begin{equation}
\label{eq:projector-basis}
\Pi_\sss\coloneqq\sum_{\bmalpha\in\ff^k}\ket{\sss,\bmalpha}_Z{}_Z\!\bra{\sss,\bmalpha}.
\end{equation}
By the orthonormality of the states $\ket{\sss,\bmalpha}_Z$, $\Pi_\sss^\dagger=\Pi_\sss$ and $\Pi_\sss\Pi_{\sss'}=\bm1[\sss=\sss']\Pi_\sss$, and $\Pi_\mathcal Q=\Pi_{\bm0}$ is the projector onto the codespace.
We call the image of $\Pi_\sss$ the \textit{syndrome subspace} associated with syndrome $\sss$. 
Summing over the syndrome gives
\begin{equation}
\label{eq:syndrome-completeness}
\sum_{\sss\in\ff^{m_X}}\Pi_\sss
=\sum_{\substack{\bb\in\ff^n\\\bb H_Z^\top=\bm0}}\ket{\bb}\bra{\bb}.
\end{equation}

For $\mathbf p\in\ff^k$, define
\begin{equation}
\label{eq:logical-X-basis}
\ket{\sss,\mathbf p}_X
\coloneqq 2^{-k/2}\sum_{\bmalpha\in\ff^k}(-1)^{\bmalpha\cdot\mathbf p}\ket{\sss,\bmalpha}_Z.
\end{equation}
These states form another orthonormal basis of the syndrome subspace.
In particular,  $\ket{\bm0,\mathbf p}_X=\ket{\overline{\mathbf p}}_X$.
A $Z$-type error $Z(\ee)$ with syndrome $\sss=\ee H_X^\top$ and logical class $\uu=\ee O_X^\top$ acts on the codewords as
\begin{align}
Z(\ee)\ket{\overline{\bmalpha}}_Z
&=(-1)^{\bmalpha\cdot\uu}\ket{\sss,\bmalpha}_Z,\\
Z(\ee)\ket{\overline{\mathbf p}}_X
&=\ket{\sss,\mathbf p+\uu}_X.
\label{eq:Pauli-basis-transition}
\end{align}
Thus $Z(\ee)$ maps the codespace onto the subspace of syndrome $\sss$, preserves $\bmalpha$ up to the sign $(-1)^{\bmalpha\cdot\uu}$ of $\overline Z(\uu)$, and shifts $\mathbf p$ by $\uu$.

\subsection{Entanglement fidelity and diamond norm}
\label{subsec:prelim-fidelity}
A \textit{quantum channel} on a finite-dimensional Hilbert space $\calH$ is a completely positive and trace-preserving linear map $\mathcal{E}\colon\mathcal{L}(\calH)\to\mathcal{L}(\calH)$~\cite{Watrous_2018}.
A linear map $\mathcal{E}\colon\mathcal{L}(\calH)\to\mathcal{L}(\calH)$ is a quantum channel if and only if it admits a \textit{Kraus representation}
\begin{equation}
    \mathcal{E}(X)=\sum_\alpha K_\alpha X K_\alpha^\dagger,
\end{equation}
for all $X\in\mathcal{L}(\calH)$, where the operators $K_\alpha$ satisfy
\begin{equation}
    \sum_\alpha K_\alpha^\dagger K_\alpha=I_{\calH}.
\end{equation}
The operators $K_\alpha$ are called \textit{Kraus operators}, and $I_{\calH}$ is the identity operator on $\calH$.

Let $\Lambda$ be a quantum channel on a $q$-dimensional Hilbert space $\calH_A$, fix an orthonormal basis $\{\ket{a}\}_{a=1}^q$, and let
\begin{equation}
    \ket{\Phi_q}\coloneqq \frac{1}{\sqrt{q}}\sum_{a=1}^q \ket{a}_R\ket{a}_A
\end{equation}
be a maximally entangled state with a reference system $R$ whose Hilbert space $\calH_R$ has the same dimension $q$ as $\calH_A$.
Throughout this work, $F_e(\Lambda)$ denotes the entanglement fidelity of the maximally mixed state $I_{\calH_A}/q$ under the channel $\Lambda$, defined as
\begin{equation}
    F_e(\Lambda)\coloneqq \bra{\Phi_q}\qty(\id_R\otimes \Lambda)\qty(\ketbra{\Phi_q}{\Phi_q})\ket{\Phi_q}.
\end{equation}
Here $\id_R$ denotes the identity map on $\mathcal{L}(\calH_R)$.
For any family of Kraus operators $K_\alpha$ of $\Lambda$, the entanglement fidelity is given by~\cite{Schumacher1996Sending}
\begin{equation}
\label{eq:Fe-kraus}
    F_e(\Lambda)=\frac{1}{q^2}\sum_{\alpha}\abs{\Tr K_\alpha}^2.
\end{equation}

For a linear map
$\Phi\colon\mathcal{L}(\calH_A)\to\mathcal{L}(\calH_A)$,
the diamond norm is defined by~\cite{Watrous_2018}
\begin{equation}
\label{eq:diamond norm}
    \|\Phi\|_\diamond
    \coloneqq
    \max_{\substack{
        X\in\mathcal{L}(\calH_R\otimes\calH_A)\\
        \|X\|_1=1
    }}
    \left\|
        \qty(\id_R\otimes\Phi)\qty(X)
    \right\|_1,
\end{equation}
where $\|X\|_1\coloneqq\Tr\sqrt{X^\dagger X}$ is the trace norm.
For a quantum channel $\Lambda$, the quantity $\frac{1}{2}\norm{\Lambda-\id}_\diamond$ satisfies the bound~\cite{majenz2018entropyquantuminformationtheory}
\begin{equation}
\label{eq:diamond-Fe}
    \frac{1}{2}\|\Lambda-\id\|_\diamond\leq q\sqrt{1-F_e(\Lambda)}.
\end{equation}

\subsection{Fourier analysis on binary vector spaces}
\label{subsec:prelim-fourier}
In this subsection, we introduce the background on Fourier analysis on $\ff^N$~\cite{ODonnell2014Analysis}.
For functions $f,g\colon\ff^N\to\mathbb C$, define the inner product
\begin{equation}
\langle f,g\rangle\coloneqq 2^{-N}\sum_{\mathbf{y}\in\ff^N}\overline{f(\mathbf{y})}g(\mathbf{y}),
\label{eq:inner-product}
\end{equation}
where $\overline{(\cdot)}$ denotes complex conjugation.
The \emph{character} $\chi_{\bmxi}\colon \ff^N\rightarrow \{\pm 1\}$ indexed by $\bmxi\in\ff^{N}$ is defined by
\begin{equation}
\label{eq:character}
\chi_{\bmxi}(\mathbf{y})\coloneqq(-1)^{\bmxi\cdot\mathbf{y}}.
\end{equation}
The characters are multiplicative in the argument and in the index,
\begin{align}
\chi_{\bmxi}(\mathbf{y}+\mathbf{y}')&=\chi_{\bmxi}(\mathbf{y})\chi_{\bmxi}(\mathbf{y}'),\label{eq:char-mult-argument}\\
\chi_{\bmxi}(\yy)\chi_{\bmxi^\prime}(\yy)&=\chi_{\bmxi+\bmxi^\prime}(\yy)\label{eq:char-mult-index},
\end{align}
for all $\bmxi,\bmxi',\mathbf{y},\mathbf{y}'\in\ff^{N}$.
They also satisfy
\begin{equation}
2^{-N}\sum_{\mathbf{y}\in\ff^N}\chi_{\bmxi}(\mathbf{y})=\bm 1[\bmxi=\bm 0]
\label{eq:parity-projector}
\end{equation}
for all $\bmxi\in\ff^N$.
The characters form an orthonormal basis of the space of functions from $\ff^N$ to $\cc$ for the inner product of Eq.~\eqref{eq:inner-product}.

The \emph{Fourier transform} of $f$ is the function $\widehat{f}\colon\ff^N\to\cc$ with
\begin{align}
\widehat f(\bmxi)\coloneqq\langle\chi_{\bmxi},f\rangle=2^{-N}\sum_{\mathbf{y}\in\ff^N}\chi_{\bmxi}(\mathbf{y})f(\mathbf{y}),\label{eq:fourier-def}
\end{align}
for all $\bmxi\in\ff^N$.
The values $\widehat f(\bmxi)$ are the \emph{Fourier coefficients} of $f$, and the expansion in the character basis is
\begin{align}
f(\mathbf{y})&=\sum_{\bmxi\in\ff^N}\widehat f(\bmxi)\chi_{\bmxi}(\mathbf{y}),
\label{eq:fourier-inverse-prelim}
\end{align}
for all $\yy\in\ff^N$.

The normalized \emph{convolution} $f*g\colon\ff^N\to\cc$ of $f$ and $g$ is defined by
\begin{equation}
\label{eq:conv-def}
(f*g)(\xx)\coloneqq 2^{-N}\sum_{\yy\in\ff^N}f(\yy)g(\xx+\yy),
\end{equation}
for all $\xx\in\ff^N$.
Equation~\eqref{eq:char-mult-argument} yields the convolution theorem,
\begin{equation}
\label{eq:conv-thm}
\widehat{f*g}(\bmxi)=\widehat f(\bmxi) \widehat g(\bmxi),
\end{equation}
for all $\bmxi$.
The function $fg\colon\ff^N\to\cc$ is defined by $(fg)(\xx)\coloneqq f(\xx)g(\xx)$ for all $\xx$, and $fg$ is called the \emph{pointwise product} of $f$ and $g$.

The \textit{Fourier $1$-norm}~\cite{Green_2008_algebra,ODonnell2014Analysis} of $f$ is the sum of the absolute values of its Fourier coefficients,
\begin{equation}
\label{eq:spectral-norm}
\norm{f}_A\coloneqq\sum_{\bmxi\in\ff^N}\abs{\widehat f(\bmxi)}.
\end{equation}
The Fourier $1$-norm has the following properties, that is, homogeneity, the pointwise bound, the triangle inequality, and submultiplicativity, respectively~\cite{ODonnell2014Analysis}.
For all $f$ and $g$, all $c\in\cc$, and all $\xx\in\ff^{N}$,
\begin{align}
\norm{cf}_A&=\abs{c}\norm{f}_A,\label{eq:spectral-homogeneous}\\
\abs{f(\xx)}&\leq\norm{f}_A,\label{eq:pointwise-bound}\\
\norm{f+g}_A&\leq\norm{f}_A+\norm{g}_A,\label{eq:spectral-triangle}\\
\norm{fg}_A&\leq\norm{f}_A\norm{g}_A.\label{eq:spectral-submultiplicative}
\end{align}

\section{Setting}
\label{sec:setting}
In this section, we define the coherent $Z$-rotation error model, transition amplitudes, and the maximum-likelihood Pauli recovery.
We also identify the entanglement infidelity that will be bounded in the remaining sections.

Following the coherent error models studied for the surface code~\cite{Bravyi2018Correcting,VennBeri2020Error,Venn2023Coherent}, we consider $Z$-type errors on the data qubits in this paper.
For angles $\bm{\theta}\coloneqq (\theta_1,\ldots, \theta_n)\in\mathbb{R}^n$ with $|\theta_j|<\pi/4$ for all $j\in[n]$, the data qubits experience the unitary error described by
\begin{equation}
    U_{\bm{\theta}}\coloneqq \bigotimes_{j=1}^n e^{\ii \theta_j Z_j}.
\label{eq: coherent error}
\end{equation}
In addition, we define
\begin{equation}
    \delta\coloneqq \max_{j\in[n]} \abs{\tan{(2\theta_j)}},
\label{eq: delta}
\end{equation}
which is finite under the assumption $|\theta_j|<\pi/4$ for all $j\in[n]$.
The threshold in Theorem~\ref{thm:threshold theorem} is stated in terms of $\delta$.

For $\sss\in\ff^{m_X}$ and $\uu\in\ff^k$, define the \textit{transition amplitude}
\begin{equation}
\label{eq: Asu-def}
A_{\sss,\uu}\coloneqq{}_X\!\bra{\sss,\uu}U_{\bm\theta}\ket{\overline{\bm0}}_X.
\end{equation}
The transition amplitudes describe the action of $U_{\bm\theta}$ on every logical basis state $\ket{\overline{\mathbf p}}_X$, not only $\ket{\overline{\bm0}}_X$.
For any $\mathbf p\in\ff^k$, the commutation of $U_{\bm\theta}$
with $\overline Z(\mathbf p)$ and every $g_Z^{(j)}$ gives
\begin{align}
U_{\bm\theta}\ket{\overline{\mathbf p}}_X
=\sum_{\sss\in\ff^{m_X}}\sum_{\uu\in\ff^k}
A_{\sss,\uu}\ket{\sss,\mathbf p+\uu}_X,
\label{eq:U-X-basis}
\end{align}
where we use Eqs.~\eqref{eq:logical-basis-action},~\eqref{eq:logical-X-basis}, and \eqref{eq: Asu-def}.

Define
\begin{equation}
Q_{\sss,\uu}\coloneqq |A_{\sss,\uu}|^2.
\label{eq: Qsu}
\end{equation}
For input $\ket{\overline{\mathbf p}}_X$, this is the transition probability to $\ket{\sss,\mathbf p+\uu}_X$ when both the syndrome and the logical $X$ basis are measured.
Taking the squared norm of Eq.~\eqref{eq:U-X-basis} at $\mathbf p=\bm0$ gives
\begin{align}
\label{eq:coset-normalization}
\sum_{\sss,\uu}Q_{\sss,\uu}
&=\sum_{\sss,\uu}|A_{\sss,\uu}|^2\nonumber\\
&=\left\|U_{\bm\theta}\ket{\overline{\bm0}}_X\right\|^2=1,
\end{align}
where the second equality uses orthonormality of the states in Eq.~\eqref{eq:logical-X-basis}, and the last uses unitarity of $U_{\bm\theta}$.

After the syndrome $\sss$ is measured, we apply a Pauli recovery.
We first define a Pauli recovery for a general syndrome-dependent decision rule $f\colon\ff^{m_X}\to\ff^k$, and specialize to the maximum-likelihood Pauli recovery below.
For each syndrome $\sss$, define the Pauli recovery 
\begin{equation}
\label{eq:Pauli-recovery-def}
R_\sss^f\coloneqq Z(\ee)    
\end{equation}
for some $\ee\in\ff^n$ satisfying
\begin{equation}
\ee H_X^\top=\sss,\qquad \ee O_X^\top=f(\sss).
\end{equation}
Such a vector $\ee$ exists because the rows of $H_X$ and $O_X$ are linearly independent.
Applying $R_\sss^f$ to $\ket{\sss,\mathbf p}_X$ gives
\begin{align}
R_\sss^f\ket{\sss,\mathbf p}_X
=\ket{\overline{\mathbf p+f(\sss)}}_X
\end{align}
by Eq.~\eqref{eq:Pauli-basis-transition}.
Thus the action on the syndrome subspace is
\begin{equation}
\label{eq:recovery-basis-action}
R_\sss^f\Pi_\sss
=\sum_{\mathbf p\in\ff^k}
\ket{\overline{\mathbf p+f(\sss)}}_X{}_X\!\bra{\sss,\mathbf p},
\end{equation}
which does not depend on the choice of $\ee$ in Eq.~\eqref{eq:Pauli-recovery-def}.
We use this action in Sec.~\ref{sec:fidelity-ml} to compute the Kraus operators.

For each syndrome $\mathbf{s}$, the \textit{maximum-likelihood} (ML) decoder $\hat{\uu}\colon\ff^{m_X}\to\ff^k$ chooses a logical class with the largest transition probability,
\begin{equation}
    \hat{\uu}(\sss)\in\argmax_{\uu\in\ff^k} Q_{\sss,\uu}
\label{eq: hatu recovery ML}
\end{equation}
and applies the Pauli recovery operation
\begin{equation}
    R_\sss^{\mathrm{ML}}\coloneqq R_\sss^{\hat{\uu}}.
\label{eq: optimal Pauli recovery}
\end{equation}
Here the rotation angles $\bm{\theta}$ are used to determine the transition probabilities $Q_{\sss,\uu}$ of Eq.~\eqref{eq: Qsu}, while the recovery remains the syndrome-dependent Pauli operator of Eq.~\eqref{eq: optimal Pauli recovery}.
We do not claim optimality over arbitrary recovery operations, such as a coherent counter-rotation constructed directly from the angles.
Our motivation is to ask whether syndrome measurement followed by Pauli recovery is sufficient to suppress coherent errors.
As shown in Proposition~\ref{prop:Fe-general}, the ML recovery maximizes the entanglement fidelity among all syndrome-dependent Pauli recoveries of this form, and therefore provides the benchmark for analyses of other decoders such as minimum-weight decoders~\cite{Dennis_2002,Higgott2022PyMatching,Gottesman2014constant,Takada2026Doubly}.

Let $\Lambda^{\mathrm{ML}}\colon \mathcal{L}\qty(\qty(\cc^2)^{\otimes k})\to \mathcal{L}\qty(\qty(\cc^2)^{\otimes k})$ denote the logical channel obtained by encoding into the codespace, applying the coherent $Z$-rotation error in Eq.~\eqref{eq: coherent error}, measuring the syndrome, and performing the recovery in Eq.~\eqref{eq: optimal Pauli recovery}, where $\mathcal Q$ is identified with $(\cc^2)^{\otimes k}$ as a $2^k$-dimensional Hilbert space.
The Kraus operators and the entanglement fidelity of the logical channel $\Lambda^{\mathrm{ML}}$ are derived in Sec.~\ref{sec:fidelity-ml}.
In particular,
\begin{equation}
    F^{\mathrm{ML}}_{e}\coloneqq F_e(\Lambda^{\mathrm{ML}})=\sum_{\sss\in\ff^{m_X}}\max_{\uu\in\ff^k}Q_{\sss,\uu}.
\label{eq: fml qsu}
\end{equation}
The infidelity $1-F_e^{\mathrm{ML}}$ is the quantity of interest, and Theorem~\ref{thm:threshold theorem} below shows that it is exponentially small in the code distance up to a prefactor linear in the number of data qubits.

\section{Main Results}
\label{sec:main-results}
The following theorem is the main result of our paper.
The theorem shows that the infidelity of each code in the family is suppressed exponentially in the code distance, up to a prefactor linear in the number of data qubits.
Under a growth condition on the code distance, the same bound gives a positive threshold in the code-capacity setting, with the coherent $Z$-rotation errors of Eq.~\eqref{eq: coherent error}, ideal syndrome measurement, and the maximum-likelihood Pauli recovery of Eq.~\eqref{eq: optimal Pauli recovery}.

\begin{theorem}
\label{thm:threshold theorem}
    Let $\{\mathcal{Q}_i\}_{i\geq 1}$ be a family of CSS $(r,c)$-QLDPC codes with code parameters $[[n_i,k_i,d_i]]$, where $n_i\to\infty$ as $i\rightarrow \infty$, and let $d_{Z,i}$ denote the minimum weight of a nontrivial logical $Z$ operator given in Eq.~\eqref{eq:dz-def}.
    Assume that there exists a constant $k_0$ satisfying
    \begin{align}
    \label{eq:k-assumption}
        1\leq k_i\leq k_0,
    \end{align}
    for all $i$.
    Define $\Delta_0=c(r-1)$, where we assume $r\geq 2$ and $c\geq 1$.
    Suppose that for each $i$, the data qubits of $\mathcal{Q}_i$ experience the coherent $Z$-rotation error of Eq.~\eqref{eq: coherent error} with angles $\bm \theta_i\in(-\pi/4,\pi/4)^{n_i}$, followed by the ideal syndrome measurement and the maximum-likelihood recovery of Eq.~\eqref{eq: optimal Pauli recovery}. 
    Let $F^{\mathrm{ML}}_{e,i}$ be the entanglement fidelity of the resulting logical channel $\Lambda^{\mathrm{ML}}_i$, and suppose that a constant $\delta_*>0$ satisfies
\begin{equation}
\label{eq:delta-theorem}
    \max_{j\in[n_i]}
    \abs{\tan(2\theta_{i,j})}
    \leq\delta_*
\end{equation}
for all $i$.
    Define
    \begin{equation}
    \label{eq:delta-th-def}
    \delta_{\mathrm{th}}\coloneqq \frac{1}{e[(e+1)\Delta_0+1]}.
    \end{equation}
   If $0<\delta_*<\delta_{\mathrm{th}}$, one may take 
   \begin{align}
        b&\coloneqq\frac{1}{2}\log\qty(\frac{\delta_{\mathrm{th}}}{\delta_*}),\label{eq:constant-b}\\
        C&\coloneqq \qty(2^{k_0}-1)^2\label{eq:constant-C},
    \end{align}
    and for all $i$,
    \begin{equation}
    \label{eq:main-bound}
        1-F^{\mathrm{ML}}_{e,i}
        \leq Cn_i e^{-bd_{Z,i}}
        \leq Cn_i e^{-bd_i}.
    \end{equation}
    In addition, if $d_i/\log n_i\rightarrow \infty$, then
    $1-F^{\mathrm{ML}}_{e,i}\to 0$ as $i\to\infty$.
\end{theorem}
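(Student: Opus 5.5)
We follow the route announced in the introduction. We bound the infidelity by pairwise products of transition amplitudes, rewrite these products as polymer partition functions via Fourier analysis, and control their ratios by the cluster expansion.

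\emph{Step 1: reduce the infidelity to pairwise amplitude products.} Using Eq.~\eqref{eq: fml qsu} and the normalization Eq.~\eqref{eq:coset-normalization}, we write
\begin{equation*}
1-F_e^{\mathrm{ML}}=\sum_{\sss}\Bigl(\sum_{\uu}Q_{\sss,\uu}-\max_{\uu}Q_{\sss,\uu}\Bigr)\le\sum_{\sss}\sum_{\uu\neq\uu'}\min\bigl(Q_{\sss,\uu},Q_{\sss,\uu'}\bigr)\le\sum_{\sss}\sum_{\uu\neq\uu'}\abs{A_{\sss,\uu}}\abs{A_{\sss,\uu'}}.
\end{equation*}
There are at most $(2^{k_0})(2^{k_0}-1)$ ordered pairs $(\uu,\uu')$, and this is where the constant $C=(2^{k_0}-1)^2$ enters. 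After factoring out, the task becomes to show that for each fixed pair $\uu\neq\uu'$ the quantity $\sum_\sss\abs{A_{\sss,\uu}A_{\sss,\uu'}}$ is at most about $n e^{-bd_Z}$.

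\emph{Step 2: Fourier representation.} Expanding $U_{\bm\theta}=\prod_j\cos\theta_j\,(I+\ii\tan\theta_j Z_j)$ gives $A_{\sss,\uu}=\prod_j\cos\theta_j\sum_{\ee:\,\ee H_X^\top=\sss,\ \ee O_X^\top=\uu}\prod_{j\in\mathrm{supp}\,\ee}\ii\tan\theta_j$. Summing over the syndrome with characters in $\yy\in\ff^{m_X}$, $\mathbf p\in\ff^k$ expresses $A_{\sss,\uu}$ as a Fourier coefficient of the product function $g(\yy,\mathbf p)=\prod_j(\cos\theta_j+\ii\sin\theta_j(-1)^{(\yy H_X+\mathbf pO_X)_j})$. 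We then bound $\sum_\sss\abs{A_{\sss,\uu}A_{\sss,\uu'}}$ by a Fourier $1$-norm of a product of the relevant functions, using Eqs.~\eqref{eq:pointwise-bound}--\eqref{eq:spectral-submultiplicative}. The aim is that, after normalizing by a reference amplitude, $A_{\sss,\uu'}/A_{\sss,\uu}$ becomes a ratio of two partition functions
\begin{equation*}
Z_{\uu}=\sum_{\ee\in\ker H_X^\top,\ \text{class }\uu}\ w(\ee),\qquad w(\ee)=\prod_{j\in\mathrm{supp}\,\ee}\omega_j .
\end{equation*}
The factorization over connected components makes the weight multiplicative, so $Z_\uu$ is an abstract polymer partition function. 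Its polymers are connected (in the Tanner-type interaction graph) supports of trivial-syndrome vectors, and two polymers are compatible when they are disconnected. The single-qubit weights $\omega_j$ are of size $\abs{\tan 2\theta_j}\le\delta$. This is where $\tan(2\theta)$ rather than $\tan\theta$ should appear: combining the $\theta$ and $-\theta$ factors from $A$ and $\bar A$ gives $\cos 2\theta$ and $\sin 2\theta$ weights.

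\emph{Step 3: cluster expansion and the distance constraint.} We use the Koteck\'y--Preiss criterion. In the QLDPC interaction graph each qubit has at most $\Delta_0=c(r-1)$ neighbours, so the number of connected sets of size $m$ containing a given qubit is at most $(e\Delta_0)^m$, up to the $(e+1)\Delta_0+1$ refinement. With $a(\gamma)=\abs{\gamma}$ the criterion holds whenever $\delta e[(e+1)\Delta_0+1]<1$, i.e.\ $\delta<\delta_{\mathrm{th}}$. The logarithm $\log Z_\uu$ is then an absolutely convergent sum over clusters. The cluster sums for $Z_\uu$ and $Z_{\uu'}$ agree except on clusters whose union carries a nontrivial logical class. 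Such clusters have at least $d_Z$ qubits by Eq.~\eqref{eq:dz-def}. Summing the cluster tail rooted at each of the $n$ qubits with the convergence margin $(\delta_*/\delta_{\mathrm{th}})^{m}$ for $m\ge d_Z$ yields $\abs{\log Z_{\uu'}-\log Z_\uu}\lesssim n e^{-2b d_Z}$. Here the factor $\tfrac12$ in $b$ is the room used to absorb the geometric tail and the use of half the margin. Converting back gives the bound in Eq.~\eqref{eq:main-bound}, and $d\le d_Z$ gives the second inequality. If $d/\log n\to\infty$, then $n e^{-bd}\to0$.

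\emph{Main obstacle.} The hard step is Step 2, making the pairwise products genuinely representable as polymer partition functions with small, possibly complex, activities while keeping the interference. The bound must stay uniform over $\sss$, and the signs $(-1)^{\sss\cdot\yy}$ must not destroy the locality or the compatibility structure. Our plan is to absorb the syndrome through a shift of $\ee$ by a fixed representative. The sum over $\sss$ is then controlled by the Fourier $1$-norm rather than termwise, so that the polymers are exactly the connected trivial-syndrome sets.
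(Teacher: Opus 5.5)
Your Step~1 and your expression of $A_{\sss,\uu}$ as a Fourier coefficient of the phase function are fine. The core of Steps~2--3, however, does not work as planned.

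First, the syndrome cannot be handled per syndrome. Fixing $\sss$ and shifting by a representative $\ee_0$ does not produce small activities. Relative to $a_{\ee_0}$, flipping qubit $j$ costs $\ii\tan\theta_j$ off $\mathrm{supp}(\ee_0)$ but $(\ii\tan\theta_j)^{-1}$ on it. So there is no polymer expansion uniform in $\sss$. In fact no uniform per-syndrome bound on $A_{\sss,\uu'}/A_{\sss,\uu}$ can hold, because a syndrome produced by half of a minimum-weight logical operator can make two classes comparably likely.

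Second, your $Z_\uu$, a sum restricted to logical class $\uu$, is not a polymer partition function. The class constraint is global and does not factor over connected components. For $\uu\neq\bm0$ the empty configuration is absent, so $\log Z_\uu$ has no cluster expansion around~$1$. Even granting it, a small $\abs{\log Z_{\uu'}-\log Z_\uu}$ would say the two classes are comparably likely, which is the opposite of what you need. The objects that are actually close are the character-twisted sums $\sum_{\ccc H_X^\top=\bm0}(-1)^{\bmalpha\cdot(\ccc O_X^\top)}w(\ccc)$, indexed by the logical $Z$-basis label $\bmalpha$. These are the Fourier transforms of your $Z_\uu$ in the logical index. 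They are genuine polymer partition functions, since the sign factorizes over polymers. Their cluster series differ only on clusters of nonzero total logical class, hence of total size at least $d_Z$.

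The missing idea is that the paper never fixes $\sss$. It writes $A_{\sss,\uu}A_{\sss,\vv}$ as the Fourier coefficient at $\sss$ of the convolution (not the pointwise product) $g_{\bmalpha,\bmalpha+\bmmu}=f_{\bmalpha}*f_{\bmalpha+\bmmu}$, summed against characters in $\bmalpha$ and $\bmmu$. Then $\sum_\sss\abs{\cdot}$ becomes a Fourier $1$-norm. Crucially, it uses $\uu\neq\vv$ before taking absolute values, by subtracting the $\bmalpha$-average $g_{\mathrm{ave}|\bmmu}$. Without this subtraction only $\norm{g_{\bmalpha,\bmbeta}}_A\le1$ is available and no decay results.

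Qubit by qubit, $f_{\bmalpha}(\yy)f_{\bmalpha+\bmmu}(\xx+\yy)$ equals $1$ or $e^{\pm2\ii\theta_j}$ according to $I_j^{\bmmu}(\xx)$. Averaging over $\yy$ gives $g_{\bmalpha,\bmalpha+\bmmu}=C_{\bmmu}\Xi_{\bmalpha|\bmmu}$. This is a polymer partition function over connected zero-syndrome sets whose activities are functions of the dual variable $\xx$, with $\norm{w_\gamma^{\bmalpha|\bmmu}}_A\le\delta^{\abs{\gamma}}$. The Koteck\'y--Preiss expansion is then run in the algebra normed by $\norm{\cdot}_A$. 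The bound $\norm{L_{\bmalpha'|\bmmu}-L_{\bmalpha|\bmmu}}_A\lesssim ne^{-bd_Z}$ is converted back via $\norm{\exp(h)-1}_A\le e^{\norm{h}_A}-1$ and $\norm{g_{\bmalpha,\bmbeta}}_A\le1$.

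The averaging step is also where the stated constant comes from: $(2^k-1)^2=\binom{q}{2}(1-q^{-1})\cdot2$. Your count of $q(q-1)$ ordered pairs does not reproduce it, since it lacks the factor $\tfrac12$ of Lemma~\ref{lem:pairwise-bound} and the factor $1-2^{-k}$ from the average.
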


The rotated planar surface codes~\cite{Bravyi1998surface,Bombin2007Optimal,Tomita2014LowDistance} with distance $d$ are $[[d^2,1,d]]$ codes whose $X$-type and $Z$-type parity-check matrices each have row weight at most $4$ and column weight at most $2$, so this family is a CSS $(4,2)$-QLDPC family in our definition with $k_i=1$ and $d_i=\sqrt{n_i}$ for every $i$, while $n_i\to\infty$, and it satisfies the assumptions of Theorem~\ref{thm:threshold theorem} with $k_0=1$ and the condition $d_i/\log n_i\to\infty$.
Then $\Delta_0=c(r-1)=6$, and Theorem~\ref{thm:threshold theorem} gives a lower bound $\delta_{\mathrm{th}}$ of Eq.~\eqref{eq:delta-th-def} on the threshold,
\begin{equation}
\delta_{\mathrm{th}}=\frac{1}{e(6e+7)}\approx1.6\times10^{-2}.
\end{equation}
For a fixed code, the condition $\max_j\abs{\tan{(2\theta_j)}}<\delta_{\mathrm{th}}$ is equivalent to
\begin{equation}
\label{eq:theta-th-surface}
\max_{j}|\theta_{j}|<\frac{1}{2}\arctan\qty(\delta_\mathrm{th})\approx2.5\times10^{-3}\pi.
\end{equation}
This value is small compared to the thresholds for coherent errors on the surface code estimated numerically in Refs.~\cite{Bravyi2018Correcting,Venn2023Coherent}.
Our contribution is not in obtaining a high threshold value, but in presenting a rigorous proof that a positive threshold exists for the coherent error model.

Beyond the surface code, Theorem~\ref{thm:threshold theorem} applies to other topological codes~\cite{Fujii2015Topological,Dennis_2002,Kitaev2003anyons,bombin2013introductiontopologicalquantumcodes} with a fixed topology and $d_i/\log n_i\to\infty$, such as the toric code~\cite{Dennis_2002,Kitaev2003anyons,Kitaev1997Quantum}, the color code~\cite{Bombin2006Topological}, and toric codes on higher-dimensional lattices~\cite{Dennis_2002}, since each of these families is a CSS QLDPC family with a bounded number of logical qubits.
For the toric code, linearly dependent rows of the parity-check matrices may be discarded as described in Sec.~\ref{subsec:prelim-css} without changing the code or increasing the row or column weights.

\begin{figure}[t]
\centering
\includegraphics[width=\columnwidth]{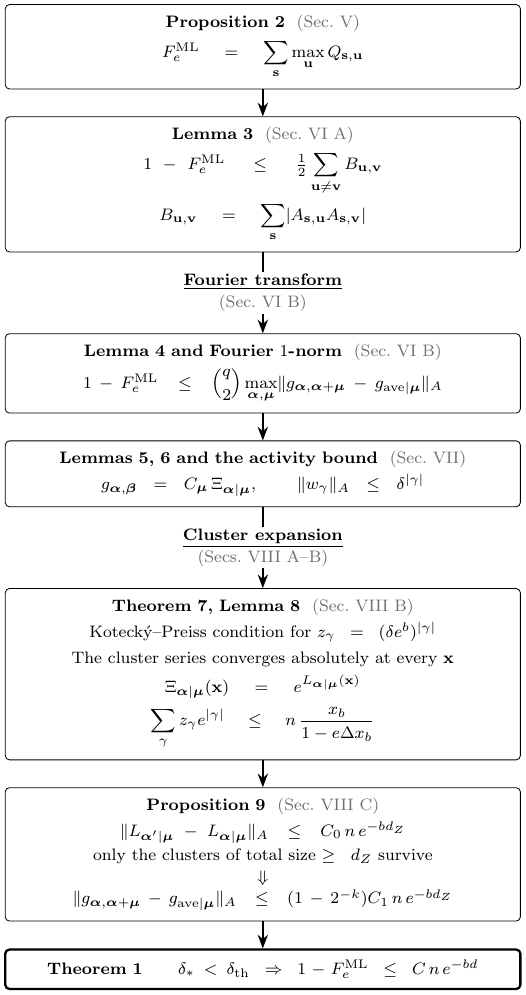}
\caption{Roadmap of the proof of Theorem~\ref{thm:threshold theorem}, showing the statement, key identity or bound, and sections for each step.
The boxes indicate the sections in which the corresponding statements are established.}
\label{fig:roadmap}
\end{figure}

We next discuss the assumption $k_i\leq k_0$.
Before using this assumption, the proof gives the bound
\begin{equation}
\label{eq:codewise-bound}
    1-F^{\mathrm{ML}}_{e,i}
    \leq (2^{k_i}-1)^2n_i e^{-bd_{Z,i}}.
\end{equation}
The assumption is used to make the prefactor uniform in the family.
Note that, more generally, Eq.~\eqref{eq:codewise-bound} implies $1-F^{\mathrm{ML}}_{e,i}\to0$ whenever
\begin{equation}
\label{eq:general-assump}
    bd_{Z,i}-\log n_i-2\log\qty(2^{k_i}-1)\longrightarrow\infty.
\end{equation}
Even without the assumption $k_i\leq k_0$, this condition holds for code families with distance growing linearly with $n_i$, such as asymptotically good CSS QLDPC codes~\cite{PanteleevKalachev2022Asymptotically,Dinur2023Good,Leverrier2022Quantum,Christandl2025Fault}, if $\delta_*$ is sufficiently small.

In Refs.~\cite{Christandl2025Fault,Aharonov2008Fault,Kitaev1997Quantum,Iverson2020Coherence}, the strength of the noise is measured by the deviation of the noisy channel from the ideal channel in the diamond norm.
Unlike the entanglement fidelity, which evaluates the channel on the maximally entangled input, the diamond distance measures the worst-case deviation from the identity channel over all inputs.
In terms of the diamond distance, Eq.~\eqref{eq:diamond-Fe} turns Eq.~\eqref{eq:main-bound} into
\begin{equation}
\label{eq:main-diamond}
    \frac{1}{2}\bigl\|\Lambda^{\mathrm{ML}}_i-\id\bigr\|_\diamond\leq 2^{k_0}\sqrt{Cn_i}e^{-bd_i/2}
\end{equation}
for the logical channel $\Lambda^{\mathrm{ML}}_i$ of $\mathcal{Q}_i$.
The right-hand side of Eq.~\eqref{eq:main-diamond} also tends to zero if $d_i/\log n_i\to\infty$.

The remainder of the paper focuses on proving Theorem~\ref{thm:threshold theorem}.
Figure~\ref{fig:roadmap} summarizes the proof.

\section{Entanglement fidelity under the maximum-likelihood Pauli recovery}
\label{sec:fidelity-ml}
Our Theorem~\ref{thm:threshold theorem} bounds the infidelity $1-F_e^{\mathrm{ML}}$.
In this section, we use the bases of the syndrome subspaces introduced in Sec.~\ref{subsec:prelim-syndrome-bases} and derive Eq.~\eqref{eq: fml qsu} from the action of the Kraus operators.

For a syndrome-dependent decision rule $f\colon\ff^{m_X}\to\ff^k$, let $\Lambda^f$ denote the logical channel obtained by using the recovery $R_\sss^f$ of Eq.~\eqref{eq:Pauli-recovery-def}.
Here, $f(\sss)$ specifies the logical class of the Pauli recovery applied after the syndrome $\sss$ is measured.
The syndrome measurement on $\mathcal{Q}$ with outcome $\sss$ followed by the recovery $R_\sss^f$ gives the operator
\begin{equation}
K_\sss^f\coloneqq R_\sss^f\Pi_\sss U_{\bm\theta}\Pi_\mathcal{Q}.
\label{eq:Ks-def}
\end{equation}
Summing $(K_\sss^f)^\dagger K_\sss^f$ over the syndromes gives
\begin{align}
\label{eq: complete condition}
\sum_{\sss\in\ff^{m_X}}(K_\sss^f)^\dagger K_\sss^f
&=\Pi_\mathcal Q U_{\bm\theta}^\dagger
\left(\sum_{\sss\in\ff^{m_X}}\Pi_\sss\right)
U_{\bm\theta}\Pi_\mathcal Q\nonumber\\
&=\Pi_\mathcal Q,
\end{align}
where the second equality uses Eq.~\eqref{eq:syndrome-completeness} together with the commutation of $U_{\bm \theta}$ with every $g_Z^{(i)}$.
Thus $\{K_\sss^f\}_{\sss\in\ff^{m_X}}$ are the Kraus operators for $\Lambda^f$ on $\mathcal Q$.

Applying the syndrome projector and the recovery to Eq.~\eqref{eq:U-X-basis} gives
\begin{align}
K_\sss^f\ket{\overline{\mathbf p}}_X
&=R_\sss^f\Pi_\sss\sum_{\sss',\uu}A_{\sss',\uu}\ket{\sss',\mathbf p+\uu}_X\nonumber\\
&=\sum_{\uu\in\ff^k}A_{\sss,\uu}\ket{\overline{\mathbf p+\uu+f(\sss)}}_X\nonumber\\
&=\sum_{\uu\in\ff^k}A_{\sss,\uu}\overline Z\bigl(f(\sss)+\uu\bigr)\ket{\overline{\mathbf p}}_X,
\label{eq:Ks-X-basis-action}
\end{align}
where the sum in the first line is over $\sss'\in\ff^{m_X}$ and $\uu\in\ff^k$, the second equality uses Eq.~\eqref{eq:recovery-basis-action}, and the last equality uses Eqs.~\eqref{eq:logical-Z-codeword-action} and~\eqref{eq:logical-X-codeword-basis}.
This gives the Kraus operator in the form
\begin{equation}
\label{eq:Ks-form}
K_\sss^f=\left(\sum_{\uu\in\ff^k}A_{\sss,\uu}\overline Z\bigl(f(\sss)+\uu\bigr)\right)\Pi_\mathcal Q.
\end{equation}
Taking the trace in the logical $X$ basis gives
\begin{align}
\label{eq:Ks-trace}
\Tr K_\sss^f
&=\sum_{\mathbf p\in\ff^k}{}_X\!\bra{\overline{\mathbf p}}K_\sss^f\ket{\overline{\mathbf p}}_X\nonumber\\
&=qA_{\sss,f(\sss)},
\end{align}
where $q=2^k$, and the last equality uses Eq.~\eqref{eq:Ks-X-basis-action}.

The following proposition gives the entanglement fidelity of $\Lambda^f$ for every decision rule $f$ and shows that the ML recovery maximizes it among these rules.
\begin{proposition}
\label{prop:Fe-general}
For every syndrome-dependent decision rule $f\colon\ff^{m_X}\to\ff^k$,
\begin{equation}
\label{eq:Fe-decision-rule}
F_e(\Lambda^f)=\sum_{\sss\in\ff^{m_X}}Q_{\sss,f(\sss)}.
\end{equation}
In particular, the entanglement fidelity of $\Lambda^{\mathrm{ML}}$ is given by Eq.~\eqref{eq: fml qsu}, that is,
\begin{equation}
\label{eq:Qsu-ML-second}
F_e^{\mathrm{ML}}=\sum_{\sss\in\ff^{m_X}}\max_{\uu\in\ff^k}Q_{\sss,\uu},
\end{equation}
and
\begin{equation}
\label{eq:Fe-inequality}
F_e(\Lambda^{\mathrm{ML}})\geq F_e(\Lambda^f)
\end{equation}
for every syndrome-dependent decision rule $f$.
\end{proposition}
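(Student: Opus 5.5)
The plan is to substitute the Kraus operators $\{K_\sss^f\}_{\sss\in\ff^{m_X}}$ of $\Lambda^f$ into the trace formula for the entanglement fidelity, Eq.~\eqref{eq:Fe-kraus}, with $q=2^k$.

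The first step is to check that these operators, viewed as maps on $\mathcal Q\cong(\cc^2)^{\otimes k}$, are legitimate Kraus operators. Equation~\eqref{eq: complete condition} shows $\sum_\sss (K_\sss^f)^\dagger K_\sss^f=\Pi_\mathcal Q$, which is the identity on $\mathcal Q$. Equation~\eqref{eq:Ks-form} shows that each $K_\sss^f$ maps $\mathcal Q$ into $\mathcal Q$. This holds because $\overline Z(\cdot)$ preserves the codespace and $K_\sss^f$ is right-multiplied by $\Pi_\mathcal Q$. Hence the trace of $K_\sss^f$ over the full Hilbert space equals its trace as an operator on $\mathcal Q$. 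That trace is the one computed in Eq.~\eqref{eq:Ks-trace} via the logical $X$ basis.

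Substituting $\Tr K_\sss^f=qA_{\sss,f(\sss)}$ into Eq.~\eqref{eq:Fe-kraus} gives
\begin{equation}
F_e(\Lambda^f)=\frac{1}{q^2}\sum_{\sss}\abs{qA_{\sss,f(\sss)}}^2=\sum_\sss Q_{\sss,f(\sss)},
\end{equation}
using the definition~\eqref{eq: Qsu}. This is Eq.~\eqref{eq:Fe-decision-rule}.

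For the ML rule $f=\hat\uu$, the choice~\eqref{eq: hatu recovery ML} gives $Q_{\sss,\hat\uu(\sss)}=\max_\uu Q_{\sss,\uu}$, which yields Eq.~\eqref{eq:Qsu-ML-second}. Finally, for any $f$ the inequality $Q_{\sss,f(\sss)}\le\max_\uu Q_{\sss,\uu}$ holds term by term in $\sss$. Summing over $\sss$ gives Eq.~\eqref{eq:Fe-inequality}.

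The only step requiring care is the trace computation in Eq.~\eqref{eq:Ks-trace}, and it is already established. There, ${}_X\!\bra{\overline{\mathbf p}}\overline{\mathbf p+\uu+f(\sss)}\rangle_X=\bm 1[\uu=f(\sss)]$, so each of the $q$ diagonal terms contributes $A_{\sss,f(\sss)}$. Beyond this, the main subtlety is ensuring that Eq.~\eqref{eq:Fe-kraus} is applied on the correct $q$-dimensional space, namely $\mathcal Q$ rather than the full $n$-qubit space. The first step above handles this, so no real obstacle remains.
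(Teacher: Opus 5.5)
Your proposal is correct and follows essentially the same route as the paper: substitute $\Tr K_\sss^f = qA_{\sss,f(\sss)}$ into Eq.~\eqref{eq:Fe-kraus}, then specialize to the ML rule and compare term by term in $\sss$. Your added check that $K_\sss^f$ maps $\mathcal Q$ into $\mathcal Q$, so the trace is taken on the correct $q$-dimensional space, is a sound clarification that the paper leaves implicit.
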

\begin{proof}
Equations~\eqref{eq:Fe-kraus} and~\eqref{eq:Ks-trace} give
\begin{equation}
F_e(\Lambda^f)=q^{-2}\sum_{\sss\in\ff^{m_X}}|\Tr K_\sss^f|^2
=\sum_{\sss\in\ff^{m_X}}Q_{\sss,f(\sss)},
\end{equation}
where the last equality also uses Eq.~\eqref{eq: Qsu}.
This is Eq.~\eqref{eq:Fe-decision-rule}.
Taking $f=\widehat{\uu}$ gives
\begin{equation*}
F_e^{\mathrm{ML}}=\sum_\sss Q_{\sss,\widehat{\uu}(\sss)}
=\sum_\sss\max_\uu Q_{\sss,\uu},
\end{equation*}
where the last equality uses Eq.~\eqref{eq: hatu recovery ML}.
This is Eq.~\eqref{eq:Qsu-ML-second}, and
\begin{equation*}
F_e(\Lambda^{\mathrm{ML}})-F_e(\Lambda^f)
=\sum_\sss\left[\max_\uu Q_{\sss,\uu}-Q_{\sss,f(\sss)}\right]\geq0,
\end{equation*}
which proves Eq.~\eqref{eq:Fe-inequality}.
\end{proof}

Proposition~\ref{prop:Fe-general} is the starting point of the threshold proof. 
The fidelity is the sum, over the syndromes, of the transition probability of the logical class selected by the recovery.

\section{Bounding entanglement infidelity by Fourier \texorpdfstring{$1$}{1}-norm}
\label{sec:fourier-rep}
In this section, we bound the infidelity $1-F_e^{\mathrm{ML}}$ of Eq.~\eqref{eq:Qsu-ML-second} in two steps.
In Sec.~\ref{subsec:pairwise-bound} we bound the infidelity by a sum over pairs of distinct logical classes.
In Sec.~\ref{subsec:syndrome-fourier} we express the amplitudes $A_{\sss,\uu}$ in the computational basis and use the restriction $\uu\neq\vv$ before taking absolute values to obtain the Fourier $1$-norm bound of Eq.~\eqref{eq:master-reduction}.

\subsection{Pairwise bound on the infidelity}
\label{subsec:pairwise-bound}
We replace the maximum in Eq.~\eqref{eq: fml qsu} by a sum of pairwise quantities indexed by logical classes.
For $\uu,\vv\in\ff^k$, define
\begin{equation}
    B_{\uu,\vv}\coloneqq\sum_{\sss\in\ff^{m_X}}\sqrt{Q_{\sss,\uu}Q_{\sss,\vv}}=\sum_{\sss\in\ff^{m_X}}\abs{A_{\sss,\uu}A_{\sss,\vv}},
\label{eq:B-def}
\end{equation}
where we use Eq.~\eqref{eq: Qsu}.
We call $B_{\uu,\vv}$ the \emph{pairwise overlap} of the logical classes $\uu$ and $\vv$.
The quantity measures how strongly the same syndromes can simultaneously support two logical classes.
The following lemma reduces the ML infidelity to the quantities
$B_{\uu,\vv}$.

\begin{lemma}
\label{lem:pairwise-bound}
\begin{equation}
\label{eq:pairwise-bound}
1-F_e^{\mathrm{ML}}\leq\frac12
\sum_{\substack{\uu,\vv\in\ff^k\\\uu\neq\vv}}B_{\uu,\vv}.
\end{equation}
\end{lemma}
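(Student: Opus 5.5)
We use the normalization in Eq.~\eqref{eq:coset-normalization} to write the infidelity syndrome by syndrome, and then bound each syndrome's contribution by pairwise geometric means. Since $\sum_{\sss,\uu}Q_{\sss,\uu}=1$, Proposition~\ref{prop:Fe-general} gives
\begin{equation*}
1-F_e^{\mathrm{ML}}=\sum_{\sss\in\ff^{m_X}}\Bigl(\sum_{\uu\in\ff^k}Q_{\sss,\uu}-\max_{\uu\in\ff^k}Q_{\sss,\uu}\Bigr).
\end{equation*}
It therefore suffices to prove, for each fixed $\sss$, the elementary inequality
\begin{equation*}
\sum_{\uu}Q_{\sss,\uu}-\max_{\uu}Q_{\sss,\uu}\leq\frac12\sum_{\uu\neq\vv}\sqrt{Q_{\sss,\uu}Q_{\sss,\vv}}.
\end{equation*}
Summing it over $\sss$ then yields Eq.~\eqref{eq:pairwise-bound} by the definition of $B_{\uu,\vv}$ in Eq.~\eqref{eq:B-def}.

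For the per-syndrome inequality, fix $\sss$ and abbreviate $x_\uu\coloneqq Q_{\sss,\uu}\geq0$. Let $\uu^\ast$ attain the maximum $M=x_{\uu^\ast}$. The left-hand side equals $\sum_{\uu\neq\uu^\ast}x_\uu$.

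For every $\uu\neq\uu^\ast$ we have $x_\uu\leq\sqrt{x_\uu M}=\sqrt{x_\uu x_{\uu^\ast}}$, because $x_\uu\leq M$. Hence
\begin{equation*}
\sum_{\uu\neq\uu^\ast}x_\uu\leq\sum_{\uu\neq\uu^\ast}\sqrt{x_\uu x_{\uu^\ast}}.
\end{equation*}
The ordered-pair sum $\frac12\sum_{\uu\neq\vv}\sqrt{x_\uu x_\vv}$ equals the sum over unordered pairs $\{\uu,\vv\}$ with $\uu\neq\vv$. The pairs containing $\uu^\ast$ contribute exactly $\sum_{\uu\neq\uu^\ast}\sqrt{x_\uu x_{\uu^\ast}}$. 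All remaining pairs contribute nonnegative terms, so the right-hand side is at least the bound above, which proves the per-syndrome inequality.

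There is no real obstacle here. The only points to get right are the factor $\frac12$, which converts ordered pairs into unordered pairs, and the observation that the argmax in Eq.~\eqref{eq: hatu recovery ML} is attained because $\ff^k$ is finite. Ties in the maximum are harmless, since any maximizer works.
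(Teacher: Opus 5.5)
Your proof is correct and follows the paper's own argument essentially step for step. You rewrite $1-F_e^{\mathrm{ML}}$ syndrome by syndrome via the normalization, bound each non-maximal $Q_{\sss,\uu}$ by $\sqrt{Q_{\sss,\uu}Q_{\sss,\hat{\uu}(\sss)}}$, and observe that these pairs are among the unordered pairs counted by $\frac12\sum_{\uu\neq\vv}$.
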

\begin{proof}
By Eq.~\eqref{eq:coset-normalization} and Proposition~\ref{prop:Fe-general},
\begin{align}
1-F_e^{\mathrm{ML}}
&=\sum_\sss\left(\sum_\uu Q_{\sss,\uu}-\max_\uu Q_{\sss,\uu}\right)\nonumber\\
&=\sum_\sss\sum_{\uu\neq\widehat{\uu}(\sss)}Q_{\sss,\uu},
\end{align}
where $\widehat{\uu}(\sss)\in\argmax_\uu Q_{\sss,\uu}$.
For fixed $\sss$, we bound the inner sum by a sum over all different pairs,
\begin{align}\nonumber
&\sum_{\uu\neq\widehat{\uu}(\sss)}Q_{\sss,\uu}\\
&\leq\sum_{\uu\neq\widehat{\uu}(\sss)}\sqrt{Q_{\sss,\uu}Q_{\sss,\widehat{\uu}(\sss)}}\nonumber\\
&\leq\frac12\sum_{\substack{\uu,\vv\in\ff^k\\\uu\neq\vv}}
\sqrt{Q_{\sss,\uu}Q_{\sss,\vv}},
\end{align}
where the first inequality uses $Q_{\sss,\uu}\leq Q_{\sss,\widehat{\uu}(\sss)}$.
The last inequality follows because the ordered pairs $(\uu,\widehat{\uu}(\sss))$ and $(\widehat{\uu}(\sss),\uu)$, over $\uu\neq\widehat{\uu}(\sss)$, are all distinct.
Summing over $\sss$ gives
\begin{align*}
1-F_e^{\mathrm{ML}}
&\leq\frac12\sum_{\substack{\uu,\vv\in\ff^k\\\uu\neq\vv}}
\sum_{\sss\in\ff^{m_X}}\sqrt{Q_{\sss,\uu}Q_{\sss,\vv}}\\
&=\frac12\sum_{\substack{\uu,\vv\in\ff^k\\\uu\neq\vv}}B_{\uu,\vv},
\end{align*}
where the last equality is Eq.~\eqref{eq:B-def}.
This is Eq.~\eqref{eq:pairwise-bound}.
\end{proof}

\noindent\emph{Remark.}
This remark expresses $A_{\sss,\uu}$ in terms of the amplitudes of error patterns and shows that replacing each amplitude by its absolute value, which discards the cancellations among the amplitudes, may not give the bound of Theorem~\ref{thm:threshold theorem}.
The dependence of $A_{\sss,\uu}$ on the rotation angles is given by
\begin{equation}
\label{eq:A-coset-sum}
A_{\sss,\uu}
=\sum_{\substack{\ee\in\ff^n\\
\ee H_X^\top=\sss,\ \ee O_X^\top=\uu}}a_\ee,
\end{equation}
with
\begin{equation}
\label{eq:ae-def}
a_\ee\coloneqq\prod_{j=1}^n
(\cos\theta_j)^{1-e_j}(\ii\sin\theta_j)^{e_j},
\end{equation}
where $a_\ee$ is the amplitude of the error pattern $Z(\ee)$ of $U_{\bm\theta}$.
This follows by substituting $U_{\bm\theta}=\sum_{\ee\in\ff^n}a_\ee Z(\ee)$ into Eq.~\eqref{eq: Asu-def} and using Eq.~\eqref{eq:Pauli-basis-transition} and the orthonormality of the states in Eq.~\eqref{eq:logical-X-basis}.
Applying the triangle inequality to the sums over error patterns in Eq.~\eqref{eq:A-coset-sum}, which replaces each $a_\ee$ by $\abs{a_\ee}$, gives
\begin{align}
\label{eq:pairwise-pattern-bound}
\frac12\sum_{\substack{\uu,\vv\in\ff^k\\\uu\neq\vv}}B_{\uu,\vv}
&\leq\frac12\sum_{\substack{\ee,\ee'\in\ff^n\\
(\ee+\ee')H_X^\top=\bm0\\
(\ee+\ee')O_X^\top\neq\bm0}}\abs{a_\ee a_{\ee'}}\nonumber\\
&=\frac12\sum_{\substack{\ccc\in\ff^n\\
\ccc H_X^\top=\bm0\\
\ccc O_X^\top\neq\bm0}}
\prod_{j:c_j=1}\abs{\sin(2\theta_j)},
\end{align}
where the sum after the inequality runs over the pairs of patterns with the same syndrome and different logical classes in Eq.~\eqref{eq:B-def}, and the equality uses $\ccc=\ee+\ee'$ and
\begin{equation}
\sum_{\ee\in\ff^n}\abs{a_\ee a_{\ee+\ccc}}
=\prod_{j:c_j=1}\abs{\sin(2\theta_j)}.
\end{equation}
Indeed, summing over $e_j$ gives $\cos^2\theta_j+\sin^2\theta_j=1$ when $c_j=0$, and $2\abs{\cos\theta_j\sin\theta_j}=\abs{\sin(2\theta_j)}$ when $c_j=1$.
Every vector retained in Eq.~\eqref{eq:pairwise-pattern-bound} has weight at least $d_Z$ by Eq.~\eqref{eq:dz-def}.
However, the addition of $Z$-type stabilizers preserves its syndrome and logical class.
Let $\ccc_0$ represent a nontrivial logical $Z$ operator with $\wt(\ccc_0)=d_Z$, and set $\mathbf z_i=\mathbf h_Z^{(i)}$ for $i\in[m_Z]$.
These vectors are linearly independent by the full row rank of $H_Z$ and have weights at most $r$ by the property of the $(r,c)$-QLDPC code.

Suppose that all rotation angles equal a fixed nonzero $\theta$, and set $t=\abs{\sin(2\theta)}\in(0,1)$.
Summing over the addition of all subsets of these stabilizers gives a lower bound on the right-hand side of
Eq.~\eqref{eq:pairwise-pattern-bound},
\begin{align}
\label{eq:exponential growth}
\frac12\sum_{\substack{\ccc\in\ff^n\\
\ccc H_X^\top=\bm0\\
\ccc O_X^\top\neq\bm0}}
t^{\wt(\mathbf{c})}&\geq \frac12\sum_{S\subseteq[m_Z]}
t^{\wt(\ccc_0+\sum_{i\in S}\mathbf z_i)}\nonumber\\
&\geq\frac12\sum_{S\subseteq[m_Z]}
t^{d_Z+\sum_{i\in S}\wt(\mathbf z_i)}\nonumber\\
&=
\frac{t^{d_Z}}2
\prod_{i=1}^{m_Z}
\left(1+t^{\wt(\mathbf z_i)}\right)\nonumber\\
&\geq
\frac{t^{d_Z}}2(1+t^r)^{m_Z},
\end{align}
where $S$ gives the indices of the added stabilizers.
Different subsets give distinct vectors because the stabilizer vectors are linearly independent.
The last expression grows exponentially in $n$ if $d_Z/n\to0$.
Thus the bounded weights of the stabilizer generators give exponential growth on the right-hand side of Eq.~\eqref{eq:exponential growth} when the cancellations among amplitudes are discarded.
In contrast, in Sec.~\ref{sec:cluster}, we use the sparsity of the QLDPC code to obtain the bound of Theorem~\ref{thm:threshold theorem}.

\subsection{Fourier \texorpdfstring{$1$}{1}-norm bound}
\label{subsec:syndrome-fourier}

In this subsection, we express the transition amplitudes $A_{\sss,\uu}$ through the phases of the computational-basis states under $U_{\bm\theta}$ and derive the Fourier $1$-norm bound of Eq.~\eqref{eq:master-reduction}.
The unitary $U_{\bm\theta}$ acts diagonally on the computational basis.
For $\bmalpha\in\ff^k$ and $\yy\in\ff^{m_X}$, define the phase of the computational-basis state $\ket{\bmalpha O_X+\yy H_X}$ under $U_{\bm\theta}$ by
\begin{align}
\label{eq:falpha-def}
f_{\bmalpha}(\yy)
&\coloneqq
\bra{\bmalpha O_X+\yy H_X}U_{\bm\theta}
\ket{\bmalpha O_X+\yy H_X}\nonumber\\
&=\prod_{j=1}^{n}\exp[\ii\theta_j(-1)^{(\bmalpha O_X)_j+(\yy H_X)_j}].
\end{align}
All factors in Eq.~\eqref{eq:falpha-def} have absolute value one, so $\abs{f_{\bmalpha}(\yy)}=1$ for all $\bmalpha\in\ff^k$ and $\yy\in\ff^{m_X}$.
Expanding the logical $X$-basis states in the computational basis gives
\begin{align}
\label{eq:syndrome-X-computational}
&\ket{\sss,\uu}_X\nonumber\\
&=2^{-(k+m_X)/2}
\sum_{\bmalpha\in\ff^k}\sum_{\yy\in\ff^{m_X}}
(-1)^{\bmalpha\cdot\uu+\sss\cdot\yy}
\ket{\bmalpha O_X+\yy H_X},
\end{align}
where we use Eqs.~\eqref{eq:syndrome-logical-basis} and~\eqref{eq:logical-X-basis}.
Substituting Eq.~\eqref{eq:syndrome-X-computational} into Eq.~\eqref{eq: Asu-def} gives
\begin{equation}
\label{eq:A-computational}
A_{\sss,\uu}
=2^{-k-m_X}
\sum_{\bmalpha\in\ff^k}\sum_{\yy\in\ff^{m_X}}
(-1)^{\bmalpha\cdot\uu+\sss\cdot\yy}f_{\bmalpha}(\yy),
\end{equation}
where $\ket{\overline{\bm0}}_X=\ket{\bm0,\bm0}_X$.
To express the products $A_{\sss,\uu}A_{\sss,\vv}$, we introduce the convolution of the functions $f_{\bmalpha}$ and $f_{\bmbeta}$.
For $\bmalpha,\bmbeta\in\ff^k$, define $g_{\bmalpha,\bmbeta}\colon\ff^{m_X}\to\cc$ by
\begin{equation}
\label{eq:g-def}
g_{\bmalpha,\bmbeta}(\xx)
\coloneqq(f_{\bmalpha}*f_{\bmbeta})(\xx)
=2^{-m_X}\sum_{\yy\in\ff^{m_X}}f_{\bmalpha}(\yy)f_{\bmbeta}(\xx+\yy).
\end{equation}
Substituting Eq.~\eqref{eq:A-computational} for both factors of $A_{\sss,\uu}A_{\sss,\vv}$ and setting $\bmbeta=\bmalpha+\bmmu$ and $\yy'=\yy+\xx$ gives
\begin{align}
\label{eq:AA-double-sum}
&A_{\sss,\uu}A_{\sss,\vv}=2^{-2k-2m_X}\nonumber\\
&\quad\times\sum_{\substack{\bmalpha,\bmmu\in\ff^k\\\xx,\yy\in\ff^{m_X}}}
\bigl[(-1)^{\bmalpha\cdot(\uu+\vv)+\bmmu\cdot\vv+\sss\cdot\xx}f_{\bmalpha}(\yy)f_{\bmalpha+\bmmu}(\yy+\xx)\bigr]\nonumber\\
&=2^{-2k-m_X}\sum_{\substack{\bmalpha,\bmmu\in\ff^k\\\xx\in\ff^{m_X}}}
\bigl[(-1)^{\bmalpha\cdot(\uu+\vv)+\bmmu\cdot\vv+\sss\cdot\xx}g_{\bmalpha,\bmalpha+\bmmu}(\xx)\bigr],
\end{align}
where $(\bmalpha,\bmbeta,\yy,\yy')\mapsto(\bmalpha,\bmmu,\yy,\xx)$ is a bijection, the signs satisfy $\sss\cdot\yy+\sss\cdot\yy'=\sss\cdot\xx$, and the last equality uses Eq.~\eqref{eq:g-def}.

To use the restriction $\uu\neq \vv$ of Lemma~\ref{lem:pairwise-bound}, we subtract from $g_{\bmalpha,\bmalpha+\bmmu}$ in Eq.~\eqref{eq:AA-double-sum} its average over $\bmalpha$.
For fixed $\bmmu\in\ff^k$, define the average $g_{\mathrm{ave}|\bmmu}\colon\ff^{m_X}\to\cc$ over $\bmalpha$ by
\begin{equation}
\label{eq:g-average-def}
g_{\mathrm{ave}|\bmmu}
\coloneqq2^{-k}\sum_{\bmalpha'\in\ff^k}
g_{\bmalpha',\bmalpha'+\bmmu}.
\end{equation}
The following lemma shows that the subtraction leaves the products with $\uu\neq\vv$ unchanged and gives the resulting Fourier $1$-norm bound.
\begin{lemma}
\label{lem:average-reduction}
For all $\uu,\vv\in\ff^k$ and all $\sss\in\ff^{m_X}$,
\begin{align}
&A_{\sss,\uu}A_{\sss,\vv} \bm1[\uu\neq\vv]=2^{-2k-m_X}\nonumber\\
&\quad\times\sum_{\substack{\bmalpha,\bmmu\in\ff^k\\\xx\in\ff^{m_X}}}
(-1)^{\bmalpha\cdot(\uu+\vv)+\bmmu\cdot\vv+\sss\cdot\xx}
\bigl[g_{\bmalpha,\bmalpha+\bmmu}(\xx)-g_{\mathrm{ave}|\bmmu}(\xx)\bigr],
\label{eq:AA-subtracted}
\end{align}
and consequently, for $\uu\neq\vv$,
\begin{equation}
\label{eq:B-reduced}
B_{\uu,\vv}
\leq\max_{\bmalpha,\bmmu\in\ff^k}
\norm{g_{\bmalpha,\bmalpha+\bmmu}-g_{\mathrm{ave}|\bmmu}}_A.
\end{equation}
\end{lemma}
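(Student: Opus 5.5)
The plan is to start from the exact identity Eq.~\eqref{eq:AA-double-sum} and compute the effect of subtracting $g_{\mathrm{ave}|\bmmu}$. The subtracted term is
\begin{equation*}
2^{-2k-m_X}\sum_{\bmmu,\xx}(-1)^{\bmmu\cdot\vv+\sss\cdot\xx}g_{\mathrm{ave}|\bmmu}(\xx)\sum_{\bmalpha}(-1)^{\bmalpha\cdot(\uu+\vv)}.
\end{equation*}
Here $g_{\mathrm{ave}|\bmmu}$ does not depend on $\bmalpha$, so the inner sum over $\bmalpha$ factors out.

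By Eq.~\eqref{eq:parity-projector} applied on $\ff^k$, this inner sum equals $2^k\bm1[\uu=\vv]$. Hence the subtracted term vanishes when $\uu\neq\vv$. When $\uu=\vv$, it equals
\begin{equation*}
2^{-k-m_X}\sum_{\bmmu,\xx}(-1)^{\bmmu\cdot\vv+\sss\cdot\xx}\,2^{-k}\sum_{\bmalpha'}g_{\bmalpha',\bmalpha'+\bmmu}(\xx).
\end{equation*}
This is exactly the right-hand side of Eq.~\eqref{eq:AA-double-sum} with $\uu=\vv$, namely $A_{\sss,\vv}^2$. So for $\uu=\vv$ the right-hand side of Eq.~\eqref{eq:AA-subtracted} is $A_{\sss,\uu}^2-A_{\sss,\uu}^2=0$, and for $\uu\neq\vv$ it is $A_{\sss,\uu}A_{\sss,\vv}$. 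This proves Eq.~\eqref{eq:AA-subtracted}.

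For Eq.~\eqref{eq:B-reduced}, I fix $\uu\neq\vv$ and write $h_{\bmalpha,\bmmu}\coloneqq g_{\bmalpha,\bmalpha+\bmmu}-g_{\mathrm{ave}|\bmmu}$. The sum over $\xx$ in Eq.~\eqref{eq:AA-subtracted} is a Fourier coefficient, since by Eq.~\eqref{eq:fourier-def}
\begin{equation*}
2^{-m_X}\sum_{\xx}(-1)^{\sss\cdot\xx}h_{\bmalpha,\bmmu}(\xx)=\widehat{h_{\bmalpha,\bmmu}}(\sss).
\end{equation*}
Therefore
\begin{equation*}
A_{\sss,\uu}A_{\sss,\vv}=2^{-2k}\sum_{\bmalpha,\bmmu}(-1)^{\bmalpha\cdot(\uu+\vv)+\bmmu\cdot\vv}\,\widehat{h_{\bmalpha,\bmmu}}(\sss).
\end{equation*}
Taking absolute values and applying the triangle inequality gives
\begin{equation*}
\abs{A_{\sss,\uu}A_{\sss,\vv}}\le 2^{-2k}\sum_{\bmalpha,\bmmu}\abs{\widehat{h_{\bmalpha,\bmmu}}(\sss)}.
\end{equation*}
Summing over $\sss$ and using Eq.~\eqref{eq:B-def} and Eq.~\eqref{eq:spectral-norm} gives
\begin{equation*}
B_{\uu,\vv}\le 2^{-2k}\sum_{\bmalpha,\bmmu}\norm{h_{\bmalpha,\bmmu}}_A.
\end{equation*}
This average over the $2^{2k}$ pairs $(\bmalpha,\bmmu)$ is at most the maximum over them, which is Eq.~\eqref{eq:B-reduced}.

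The only delicate point is the order of operations: the subtraction must be performed before absolute values are taken. Only the character-orthogonality step on $\bmalpha$ uses the restriction $\uu\neq\vv$. The remaining steps are bookkeeping with the bijection already used in Eq.~\eqref{eq:AA-double-sum}.
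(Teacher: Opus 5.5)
Your proposal is correct and follows essentially the same route as the paper: character orthogonality in $\bmalpha$ kills the subtracted average when $\uu\neq\vv$, then the triangle inequality over $(\bmalpha,\bmmu)$, recognition of the $\xx$-sum as a Fourier coefficient, and averaging bounded by the maximum. The only cosmetic difference is in the case $\uu=\vv$. There you identify the subtracted term with $A_{\sss,\vv}A_{\sss,\vv}$ via Eq.~\eqref{eq:AA-double-sum}, whereas the paper directly notes that $\sum_{\bmalpha}\bigl[g_{\bmalpha,\bmalpha+\bmmu}(\xx)-g_{\mathrm{ave}|\bmmu}(\xx)\bigr]=0$; both amount to the same computation.
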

\begin{proof}
We first prove Eq.~\eqref{eq:AA-subtracted}.
For each $\bmmu\in\ff^k$ and $\xx\in\ff^{m_X}$, $g_{\mathrm{ave}|\bmmu}(\xx)$ does not depend on $\bmalpha$, so
\begin{align}
&\sum_{\bmalpha\in\ff^k}(-1)^{\bmalpha\cdot(\uu+\vv)}g_{\mathrm{ave}|\bmmu}(\xx)\nonumber\\
&=g_{\mathrm{ave}|\bmmu}(\xx)\sum_{\bmalpha\in\ff^k}(-1)^{\bmalpha\cdot(\uu+\vv)}=0,
\end{align}
where the last equality uses Eq.~\eqref{eq:parity-projector}, since $\uu+\vv\neq\bm0$.
Subtracting $g_{\mathrm{ave}|\bmmu}(\xx)$ from $g_{\bmalpha,\bmalpha+\bmmu}(\xx)$ in Eq.~\eqref{eq:AA-double-sum} therefore leaves $A_{\sss,\uu}A_{\sss,\vv}$ unchanged and proves Eq.~\eqref{eq:AA-subtracted} for $\uu\neq\vv$.
For $\uu=\vv$, the factor $(-1)^{\bmalpha\cdot(\uu+\vv)}$ in Eq.~\eqref{eq:AA-subtracted} equals one, and Eq.~\eqref{eq:g-average-def} gives
\begin{align}
&\sum_{\bmalpha\in\ff^k}
\qty[g_{\bmalpha,\bmalpha+\bmmu}(\xx)-g_{\mathrm{ave}|\bmmu}(\xx)]\nonumber\\
&=\sum_{\bmalpha\in\ff^k}g_{\bmalpha,\bmalpha+\bmmu}(\xx)
-2^kg_{\mathrm{ave}|\bmmu}(\xx)=0.
\end{align}
Equation~\eqref{eq:AA-subtracted} therefore holds for $\uu=\vv$, since the right-hand side is zero and $\bm1[\uu\neq\vv]=0$.

To prove Eq.~\eqref{eq:B-reduced} for $\uu\neq\vv$, we take the absolute value of Eq.~\eqref{eq:AA-subtracted} and apply the triangle inequality to the sum over $\bmalpha,\bmmu$.
The sum over $\xx$ remains inside the absolute value.
Summing over $\sss$ and using Eq.~\eqref{eq:B-def} gives
\begin{align}
\label{eq:B-fourier-bound}
&B_{\uu,\vv}\leq2^{-2k}\nonumber\\
&\enspace\times\sum_{\substack{\bmalpha,\bmmu\in\ff^k\\\sss\in\ff^{m_X}}}
\biggl|2^{-m_X}\sum_{\xx\in\ff^{m_X}}(-1)^{\sss\cdot\xx}
\bigl[g_{\bmalpha,\bmalpha+\bmmu}(\xx)-g_{\mathrm{ave}|\bmmu}(\xx)\bigr]\biggr|\nonumber\\
&=2^{-2k}\sum_{\bmalpha,\bmmu\in\ff^k}
\norm{g_{\bmalpha,\bmalpha+\bmmu}-g_{\mathrm{ave}|\bmmu}}_A\nonumber\\
&\leq\max_{\bmalpha,\bmmu\in\ff^k}
\norm{g_{\bmalpha,\bmalpha+\bmmu}-g_{\mathrm{ave}|\bmmu}}_A,
\end{align}
where the equality uses Eqs.~\eqref{eq:fourier-def} and~\eqref{eq:spectral-norm}, and the last inequality uses that $2^{-2k}\sum_{\bmalpha,\bmmu}$ is an average over all pairs.
\end{proof}

The products $A_{\sss,\uu}A_{\sss,\vv}$ are sums over pairs of errors with the same syndrome.
For errors $\ee$ and $\ee'$ with syndrome $\sss$ and logical classes $\uu$ and $\vv$, respectively,
\begin{equation*}
(\ee+\ee')H_X^\top=\bm0,
\qquad
(\ee+\ee')O_X^\top=\uu+\vv.
\end{equation*}
Hence $Z(\ee+\ee')$ is a $Z$-type stabilizer when $\uu=\vv$, while it is a nontrivial logical $Z$ operator of weight at least $d_Z$ when $\uu\neq\vv$.
Thus the subtraction in Eq.~\eqref{eq:AA-subtracted} removes the pairs that differ by a stabilizer and retains the pairs that differ by a nontrivial logical $Z$ operator.

The right-hand side of Eq.~\eqref{eq:B-reduced} does not depend on the pair $(\uu,\vv)$, so the same bound applies to all $q(q-1)$ ordered pairs with $\uu\neq\vv$ in Lemma~\ref{lem:pairwise-bound}.
Combining Lemmas~\ref{lem:pairwise-bound} and~\ref{lem:average-reduction}, with $q(q-1)/2=\binom{q}{2}$, gives
\begin{equation}
\label{eq:master-reduction}
1-F_e^{\mathrm{ML}}\leq\binom{q}{2}
\max_{\bmalpha,\bmmu\in\ff^k}
\norm{g_{\bmalpha,\bmalpha+\bmmu}-g_{\mathrm{ave}|\bmmu}}_A.
\end{equation}

For use in Sec.~\ref{sec:cluster}, we next prove $\norm{g_{\bmalpha,\bmbeta}}_A\leq1$.
For every $\bmalpha\in\ff^k$, 
\begin{align}
&\sum_{\sss\in\ff^{m_X}}\abs{\widehat{f_{\bmalpha}}(\sss)}^2\nonumber\\
&=\sum_{\sss\in\ff^{m_X}}
\abs{2^{-m_X}\sum_{\yy\in\ff^{m_X}}(-1)^{\sss\cdot\yy}f_{\bmalpha}(\yy)}^2\nonumber\\
&=2^{-2m_X}\sum_{\sss,\yy,\yy'\in\ff^{m_X}}
(-1)^{\sss\cdot(\yy+\yy')}
f_{\bmalpha}(\yy)\overline{f_{\bmalpha}(\yy')}\nonumber\\
&=2^{-m_X}\sum_{\yy\in\ff^{m_X}}\abs{f_{\bmalpha}(\yy)}^2=1,
\label{eq:f-l2}
\end{align}
where the first equality uses Eq.~\eqref{eq:fourier-def}, the third uses Eq.~\eqref{eq:parity-projector}, and the last equality uses $\abs{f_{\bmalpha}(\yy)}=1$ from Eq.~\eqref{eq:falpha-def}.
Then we have
\begin{align}
\norm{g_{\bmalpha,\bmbeta}}_A
&=\sum_{\sss\in\ff^{m_X}}\abs{\widehat{f_{\bmalpha}}(\sss)}\abs{\widehat{f_{\bmbeta}}(\sss)}\nonumber\\
&\leq\qty(\sum_{\sss\in\ff^{m_X}}\abs{\widehat{f_{\bmalpha}}(\sss)}^2)^{1/2}
\qty(\sum_{\sss\in\ff^{m_X}}\abs{\widehat{f_{\bmbeta}}(\sss)}^2)^{1/2}\nonumber\\
&=1,
\label{eq:g-spectral-bound}
\end{align}
by the convolution theorem in Eq.~\eqref{eq:conv-thm}, the Cauchy-Schwarz inequality, and Eq.~\eqref{eq:f-l2}.
The remaining sections bound the right-hand side of Eq.~\eqref{eq:master-reduction} through the triangle inequality and the submultiplicativity of $\norm{\cdot}_A$.

\section{Polymer representation}
    \label{sec:polymer}

In this section, we derive a polymer representation of the function $g_{\bmalpha,\bmbeta}$ of Eq.~\eqref{eq:g-def}.
In Sec.~\ref{subsec:g-factorization}, we factor $g_{\bmalpha,\bmbeta}$ into two functions.
In Sec.~\ref{subsec:polymer-gas}, we introduce abstract polymer models, define the polymers and activities of our model, and obtain its partition function.
In Sec.~\ref{subsec:activity-bound}, we bound the activities in the Fourier $1$-norm for use in Sec.~\ref{sec:cluster}.

    \subsection{Factorization of \texorpdfstring{$g_{\bm\alpha,\bm\beta}$}{g}}
    \label{subsec:g-factorization}
To bound the right-hand side of Eq.~\eqref{eq:master-reduction}, we use a factorization that separates the dependence on $\bmalpha$.
Fix $\bmalpha,\bmbeta\in\ff^k$ and $\bmmu=\bmalpha+\bmbeta$ in Eq.~\eqref{eq:AA-double-sum}.
For $\xx\in\ff^{m_X}$ and a data qubit $j\in[n]$, 
consider the product of the factors at qubit $j$ in $f_{\bmalpha}(\yy)f_{\bmbeta}(\xx+\yy)$.
By Eq.~\eqref{eq:falpha-def}, when $(\bmmu O_X+\xx H_X)_j=1$, the two factors have opposite exponents and multiply to one.
When $(\bmmu O_X+\xx H_X)_j=0$, the exponents are equal and the product is $e^{\pm2\ii\theta_j}=\cos(2\theta_j)[1\pm\ii\tan(2\theta_j)]$.
To select qubits with equal exponents, define
\begin{equation}
\label{eq:Imu-def}
I_j^{\bmmu}(\xx)
\coloneqq
\frac{1+(-1)^{(\bmmu O_X)_j+(\xx H_X)_j}}{2}
\in\{0,1\}.
\end{equation}

Using these notations, define the functions
$C_{\bmmu}\colon\ff^{m_X}\to\mathbb{R}$ and
$\Xi_{\bmalpha|\bmmu}\colon\ff^{m_X}\to\cc$ by
\begin{equation}
\label{eq:Cmu-def}
C_{\bmmu}(\xx)
\coloneqq
\prod_{j=1}^{n}
\qty[\cos(2\theta_j)]^{I_j^{\bmmu}(\xx)}
\end{equation}
and
\begin{equation}
\label{Xi-def}
\Xi_{\bmalpha|\bmmu}(\xx)
\coloneqq
\sum_{\substack{\ccc\in\ff^n\\
\ccc H_X^\top=\bm0}}
\prod_{j\colon c_j=1}
\qty[
\ii\tan(2\theta_j)
(-1)^{(\bmalpha O_X)_j}
I_j^{\bmmu}(\xx)
].
\end{equation}

    With these definitions, we obtain the following factorization.
    \begin{lemma}
    \label{lem:g-factorized}
        For all $\bmalpha,\bmbeta\in\ff^k$ and all $\xx\in\ff^{m_X}$,
    \begin{equation}
    g_{\bm\alpha,\bm\beta}(\mathbf x)
    =C_{\bm\mu}(\mathbf x) \Xi_{\bm\alpha|\bm\mu}(\mathbf x),
    \end{equation}
    where $\bmmu=\bmalpha+\bmbeta$.
    \end{lemma}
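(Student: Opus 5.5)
The plan is to start from the definition of $g_{\bmalpha,\bmbeta}$ as the convolution in Eq.~\eqref{eq:g-def} and evaluate the product $f_{\bmalpha}(\yy)f_{\bmbeta}(\xx+\yy)$ qubit by qubit, using Eq.~\eqref{eq:falpha-def}. Write $\sigma_j(\yy)\coloneqq(-1)^{(\bmalpha O_X)_j+(\yy H_X)_j}$. The exponent of the $\bmbeta$ factor at qubit $j$ is then $\sigma_j(\yy)(-1)^{(\bmmu O_X)_j+(\xx H_X)_j}$, because $\bmbeta=\bmalpha+\bmmu$ and $(\xx+\yy)H_X=\xx H_X+\yy H_X$. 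This gives the qubit-$j$ factor.

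If $I_j^{\bmmu}(\xx)=0$, the two phases cancel and the factor is $1$. If $I_j^{\bmmu}(\xx)=1$, the factor is $e^{2\ii\theta_j\sigma_j(\yy)}=\cos(2\theta_j)\bigl[1+\ii\tan(2\theta_j)\sigma_j(\yy)\bigr]$, using that $\tan$ is odd and $\sigma_j=\pm1$. In both cases the qubit-$j$ factor can be written as
\begin{equation*}
[\cos(2\theta_j)]^{I_j^{\bmmu}(\xx)}\bigl[1+\ii\tan(2\theta_j)\sigma_j(\yy)I_j^{\bmmu}(\xx)\bigr].
\end{equation*}
Taking the product over $j$ pulls out $C_{\bmmu}(\xx)$ from Eq.~\eqref{eq:Cmu-def}, which does not depend on $\yy$. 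The condition $|\theta_j|<\pi/4$ guarantees that $\cos(2\theta_j)>0$ and that $\tan(2\theta_j)$ is finite.

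Next I expand the remaining product $\prod_j[1+w_j\sigma_j(\yy)]$, with $w_j\coloneqq\ii\tan(2\theta_j)I_j^{\bmmu}(\xx)$, as a sum over subsets of qubits, encoded by $\ccc\in\ff^n$:
\begin{equation*}
\sum_{\ccc\in\ff^n}\prod_{j\colon c_j=1}w_j\sigma_j(\yy).
\end{equation*}
The product of the signs factors as $\prod_{j\colon c_j=1}\sigma_j(\yy)=(-1)^{\ccc\cdot(\bmalpha O_X)}(-1)^{\ccc\cdot(\yy H_X)}$. The second sign equals $(-1)^{\yy\cdot(\ccc H_X^\top)}=\chi_{\ccc H_X^\top}(\yy)$.

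Averaging over $\yy$ with the prefactor $2^{-m_X}$ from Eq.~\eqref{eq:g-def}, Eq.~\eqref{eq:parity-projector} kills every term with $\ccc H_X^\top\neq\bm0$ and leaves weight $1$ on the others. The surviving sum is exactly $\Xi_{\bmalpha|\bmmu}(\xx)$ from Eq.~\eqref{Xi-def}, since $(-1)^{\ccc\cdot(\bmalpha O_X)}=\prod_{j\colon c_j=1}(-1)^{(\bmalpha O_X)_j}$.

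There is no real obstacle; this is a direct computation. The only step needing care is the sign bookkeeping that shows the $\bmbeta$ exponent equals the $\bmalpha$ exponent times $(-1)^{(\bmmu O_X+\xx H_X)_j}$. This uses linearity over $\ff$ and the fact that the same $\yy$ appears in both factors. I will also check that the interchange of the finite sums over $\yy$ and $\ccc$ is legitimate, which is trivial because both sums are finite.
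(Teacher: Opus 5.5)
Your proposal is correct and follows essentially the same route as the paper: you write the qubit-$j$ factor of $f_{\bmalpha}(\yy)f_{\bmbeta}(\xx+\yy)$ as $[\cos(2\theta_j)]^{I_j^{\bmmu}(\xx)}[1+\ii\tan(2\theta_j)(\cdots)]$, expand the product over $\ccc\in\ff^n$, and then use Eq.~\eqref{eq:parity-projector} to keep only the terms with $\ccc H_X^\top=\bm0$. The differences are cosmetic: you fold the $\bmalpha$-sign into $\sigma_j(\yy)$ and treat the two values of $I_j^{\bmmu}(\xx)$ by cases, whereas the paper keeps $\eta_j(\yy)=(-1)^{(\yy H_X)_j}$ separate and uses the identity $1+(-1)^{(\bmmu O_X)_j}\eta_j(\xx)=2I_j^{\bmmu}(\xx)$.
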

    \begin{proof}
    For each data qubit $j\in[n]$, define $\eta_j\colon\ff^{m_X}\to\{\pm1\}$ by $\eta_j(\yy)\coloneqq (-1)^{(\yy H_X)_j}$.
    By Eq.~\eqref{eq:falpha-def} and $\eta_j(\xx+\yy)=\eta_j(\xx)\eta_j(\yy)$, the product of the factors at qubit $j$ in $f_{\bmalpha}(\yy)$ and $f_{\bmbeta}(\xx+\yy)$ is
\begin{align}
&\exp[\ii\theta_j(-1)^{(\bmalpha O_X)_j}\eta_j(\yy)]
\exp[\ii\theta_j(-1)^{(\bmbeta O_X)_j}\eta_j(\xx)\eta_j(\yy)]\nonumber\\
&=\exp\qty[
\ii\theta_j(-1)^{(\bmalpha O_X)_j}\eta_j(\yy)
\qty(1+(-1)^{(\bmmu O_X)_j}\eta_j(\xx))
]\nonumber\\
&=\exp[2\ii\theta_j(-1)^{(\bmalpha O_X)_j}
I^{\bmmu}_j(\xx) \eta_j(\yy)],
\end{align}
    where the first equality uses $(-1)^{(\bmbeta O_X)_j}=(-1)^{(\bmalpha O_X)_j}(-1)^{(\bmmu O_X)_j}$ by $\bmbeta=\bmalpha+\bmmu$, and the second equality uses $1+(-1)^{(\bmmu O_X)_j}\eta_j(\xx)=2I^{\bmmu}_j(\xx)$ by Eq.~\eqref{eq:Imu-def}.

Using $I_j^{\bmmu}(\xx)\in\{0,1\}$, we have
\begin{align}
\label{eq:local-phase}
&\exp\qty[
2\ii\theta_j(-1)^{(\bmalpha O_X)_j}
I_j^{\bmmu}(\xx)\eta_j(\yy)
]\nonumber\\
&=
[\cos(2\theta_j)]^{I_j^{\bmmu}(\xx)}
\left[
1+\ii\tan(2\theta_j)(-1)^{(\bmalpha O_X)_j}
I_j^{\bmmu}(\xx)\eta_j(\yy)
\right].
\end{align}
    
    Multiplying Eq.~\eqref{eq:local-phase} over $j\in[n]$ and using Eqs.~\eqref{eq:Cmu-def} and~\eqref{eq:g-def} gives
    \begin{align}
    \label{eq:g-intermediate}
        &g_{\bmalpha,\bmbeta}(\xx)\nonumber\\
        &=C_{\bmmu}(\xx)2^{-m_X}\nonumber\\
        &\quad\times \sum_{\yy\in \ff^{m_X}}\prod_{j=1}^n
        \qty[1+\ii\tan(2\theta_j)(-1)^{(\bmalpha O_X)_j}I_j^{\bmmu}(\xx)\eta_j(\yy)]\nonumber\\
        &=C_{\bmmu}(\xx)2^{-m_X}\nonumber\\
        &\quad\times \sum_{\substack{\yy\in\ff^{m_X}\\\ccc\in\ff^n}}\prod_{j\colon c_j=1}\qty[\ii\tan(2\theta_j)(-1)^{(\bmalpha O_X)_j}I_j^{\bmmu}(\xx)\eta_j(\yy)],
    \end{align}
    where the last equality expands the product over $j$.
    The factors other than $\eta_j(\yy)$ do not depend on $\yy$, and
    \begin{equation}
    \label{eq:eta-product}
    \prod_{j\colon c_j=1}\eta_j(\yy )
    =(-1)^{(\yy H_X)\cdot\mathbf c}
    =(-1)^{\yy \cdot(\mathbf cH_X^\top)}.
    \end{equation}
    Substituting Eq.~\eqref{eq:eta-product} into Eq.~\eqref{eq:g-intermediate} gives
\begin{align}
    &g_{\bmalpha,\bmbeta}(\xx)\nonumber\\
    &=C_{\bmmu}(\xx)\sum_{\ccc\in\ff^n}\prod_{j\colon c_j=1}\qty[\ii\tan(2\theta_j)(-1)^{(\bmalpha O_X)_j}I_j^{\bmmu}(\xx)]\nonumber\\
    &\quad\times2^{-m_X}\sum_{\yy\in\ff^{m_X}}(-1)^{\yy\cdot(\ccc H_X^\top)}\nonumber\\
    &=C_{\bmmu}(\xx)\Xi_{\bmalpha|\bmmu}(\xx),
\end{align}
    where the second equality uses Eqs.~\eqref{eq:parity-projector} and~\eqref{Xi-def}.
    \end{proof}

Lemma~\ref{lem:g-factorized} separates the factor that depends only on $\bmmu=\bmalpha+\bmbeta$ from the factor that depends on $\bmalpha$.
The next subsection expresses the latter factor $\Xi_{\bm\alpha|\bm\mu}$ as a polymer partition function.

 \subsection{Polymers and activities}
    \label{subsec:polymer-gas}
   In this subsection, we first introduce abstract polymer models and the adjacency graph.
   Then we define the polymers and the activities of our polymer model.
   Finally, we express $\Xi_{\bmalpha|\bmmu}$ of Eq.~\eqref{Xi-def} as the partition function of our model.

An \textit{abstract polymer model} consists of a finite set whose elements are called polymers, an activity assigned to each polymer, and a compatibility relation $\gamma\sim\gamma'$ on ordered pairs of polymers~\cite{Koteck__1986_polymer,friedli_velenik_2017_polymer}.
    The relation is required to be symmetric, i.e., $\gamma\sim\gamma'$ holds if $\gamma'\sim\gamma$ holds.
We say polymers $\gamma$ and $\gamma'$ are \textit{compatible} if $\gamma\sim\gamma'$ holds, and \textit{incompatible}, denoted by $\gamma\nsim\gamma'$, otherwise.
    Every polymer is required to be incompatible with itself.
    The \textit{partition function} of an abstract polymer model is the sum of the products of the activities over the finite sets of pairwise compatible polymers, including the empty set, whose contribution is one.
        To express the sum over the vectors $\ccc$ with $\ccc H_X^\top=\bm0$ in Eq.~\eqref{Xi-def} as the partition function of our model, we decompose the support of each such vector into connected components in a graph on the data qubits, and take these components as the polymers of our model.

To define these connected components, we introduce a graph, called the \textit{adjacency graph}, to describe which qubits may influence one another through a shared check.
The adjacency graph of the parity-check matrix $H_X$ is the graph $\calG_X=(V,E)$ with the vertex set $V\coloneqq [n]$ and the edge set $E\coloneqq \{\{j,j'\}\colon j,j^\prime\in V,j\neq j', \text{there exists $i\in [m_X]$ with $j,j'\in\mathrm{supp}(\mathbf{h}_X^{(i)})$}\}$, i.e., one vertex for each data qubit, with an edge between two distinct qubits when some $X$-type check acts on both~\cite{Gottesman2014constant,KovalevPryadko2013FaultTolerance}.
The degree of a vertex $j\in V$ is the number of edges containing it, $
\deg(j)\coloneqq\bigl\lvert\{j'\in V\colon\{j,j'\}\in E\}\bigr\rvert$, and $\Delta\coloneqq\max_{j\in V}\deg(j)$ denotes the maximum degree of $\calG_X$.
For $(r,c)$-QLDPC codes, the maximum degree $\Delta$ of $\calG_X$ is bounded by
\begin{equation}
    \Delta\leq c(r-1).
\label{eq: delta bound}
\end{equation}
This holds since each qubit belongs to at most $c$ generators and each such generator acts on at most $r-1$ other qubits.

A subset $S\subseteq V$ is \textit{connected} in $\calG_X$ if, for all $j,j'\in S$, there exist an integer $\ell\geq0$ and vertices $v_0,v_1,\ldots,v_\ell\in S$ such that
\begin{equation}
\label{eq:connected-def}
v_0=j,\quad v_\ell=j',\quad
\{v_{t-1},v_t\}\in E\quad(t=1,\ldots,\ell).
\end{equation}
For $j,j'\in S$, the relation that holds when Eq.~\eqref{eq:connected-def} is satisfied is an equivalence relation on $S$.
Its equivalence classes are called the $\textit{connected components}$ of $S$.
Hence, every subset $S\subseteq V$ is the disjoint union of its connected components.
Two disjoint subsets $\gamma,\gamma'\subseteq V$ are \textit{adjacent} in $\calG_X$ if $\{j,j'\}\in E$ for some $j\in\gamma$ and $j'\in \gamma'$, and \textit{nonadjacent} otherwise.
Distinct connected components $\gamma$ and $\gamma'$ of a subset are nonadjacent.
Conversely, if nonempty connected subsets $\gamma_1,\ldots,\gamma_m$ of $V$ are pairwise disjoint and nonadjacent, then they are the connected components of their union~\cite{West2001}.

    Next, we introduce zero-syndrome vectors, from which we will build polymers.
    We call a vector $\ccc\in\ff^n$ with $\ccc H_X^\top=\bm0$ a zero-syndrome vector.
    For a subset $\gamma\subseteq[n]$, let $\mathbf{1}_\gamma\in\ff^n$ denote its indicator vector, whose $j$-th entry is one if $j\in\gamma$ and zero otherwise.
    Using this notation, the following lemma reduces the zero-syndrome condition on $\ccc$ to the zero-syndrome condition on each connected component of its support.
    \begin{lemma}
    \label{lem:component-zero-syndrome}
        For any binary vector $\ccc\in\ff^n$, let $\gamma_1,\ldots, \gamma_M$ denote the connected components of $\mathrm{supp}(\ccc)$ in $\calG_X$.
        Then $\ccc$ is a zero-syndrome vector if and only if $\mathbf{1}_{\gamma_a}$ is a zero-syndrome vector for all $a\in[M]$.
    \end{lemma}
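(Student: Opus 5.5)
The plan is to compare the syndrome of $\ccc$ check by check with the syndromes of the component indicators. The key observation is that the support of every $X$-type check lies inside a single connected component of $\mathrm{supp}(\ccc)$, or else is disjoint from $\mathrm{supp}(\ccc)$. More precisely, for each row $\mathbf h_X^{(i)}$ the set $\mathrm{supp}(\mathbf h_X^{(i)})\cap\mathrm{supp}(\ccc)$ is contained in one component $\gamma_a$, or is empty. Indeed, any two distinct qubits $j,j'$ in $\mathrm{supp}(\mathbf h_X^{(i)})$ are joined by an edge of $\calG_X$ by the definition of $E$. Hence if $j,j'$ both lie in $\mathrm{supp}(\ccc)$, they satisfy Eq.~\eqref{eq:connected-def} with $\ell=1$ inside $\mathrm{supp}(\ccc)$ and so lie in the same component.

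Next I would write $\ccc=\sum_{a=1}^M\mathbf 1_{\gamma_a}$. This holds because the components partition $\mathrm{supp}(\ccc)$, so the indicator vectors have disjoint supports and add without cancellation. The $i$-th syndrome bit of $\mathbf 1_{\gamma_a}$ is $|\gamma_a\cap\mathrm{supp}(\mathbf h_X^{(i)})|\bmod 2$. By the observation above, for fixed $i$ at most one $a$, say $a(i)$, has a nonzero intersection. Therefore $(\ccc H_X^\top)_i=(\mathbf 1_{\gamma_{a(i)}}H_X^\top)_i$, while $(\mathbf 1_{\gamma_a}H_X^\top)_i=0$ for every $a\neq a(i)$. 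If no component meets the check, all of these bits are zero.

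Both directions then follow. If every $\mathbf 1_{\gamma_a}$ has zero syndrome, then $\ccc H_X^\top=\sum_a\mathbf 1_{\gamma_a}H_X^\top=\bm0$ by linearity. Conversely, if $\ccc H_X^\top=\bm0$, fix $a$ and $i$. Either $\gamma_a$ misses the check, and the bit $(\mathbf 1_{\gamma_a}H_X^\top)_i$ is trivially zero, or $a=a(i)$, and the bit equals $(\ccc H_X^\top)_i=0$. The only substantive step is the observation that each check's restriction to $\mathrm{supp}(\ccc)$ lies in a single component; it is immediate from the construction of $\calG_X$, so I expect no real obstacle.
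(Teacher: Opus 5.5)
Your proposal is correct and follows essentially the same route as the paper. Both write $\ccc=\sum_a\mathbf 1_{\gamma_a}$, show that each check's support meets at most one component because any two of its qubits are joined by an edge of $\calG_X$, and conclude both directions from the resulting single-term syndrome sums. The only difference is that you argue directly via the equivalence relation, where the paper argues by contradiction with the nonadjacency of distinct components.
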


    \begin{proof}
        The connected components are disjoint and their union is $\mathrm{supp}(\ccc)$, so $\ccc=\sum_{a=1}^{M}\mathbf{1}_{\gamma_a}$ holds, and we have for every row $\mathbf{h}_X^{(i)}$ of $H_X$,
        \begin{equation}
        \label{eq:syndrome-split}
            \mathbf{h}_X^{(i)}\cdot \ccc=\sum_{a=1}^M \mathbf{h}_X^{(i)}\cdot\one_{\gamma_a}.
        \end{equation}
    If $\one_{\gamma_a} H_X^\top=\bm0$ for every $a$, then every term on the right-hand side of Eq.~\eqref{eq:syndrome-split} is zero, so $\ccc H_X^\top=\bm0$.

    Conversely, for a fixed row $\mathbf{h}_X^{(i)}$, at most one connected component $\gamma_a$ satisfies $\mathrm{supp}(\mathbf{h}_X^{(i)})\cap\gamma_a\neq\emptyset$.
Suppose instead that $\mathrm{supp}(\mathbf{h}_X^{(i)})\cap\gamma_a\neq\emptyset$ and $\mathrm{supp}(\mathbf{h}_X^{(i)})\cap \gamma_b\neq \emptyset$ for two different connected components $\gamma_a$ and $\gamma_b$, and take data qubits $j\in\mathrm{supp}(\mathbf{h}_X^{(i)})\cap \gamma_a$ and $j'\in\mathrm{supp}(\mathbf{h}_X^{(i)})\cap \gamma_b$.
    The row $\mathbf{h}_X^{(i)}$ is nonzero at both $j$ and $j'$, and $j\neq j'$ since the components are disjoint, so $\{j,j'\}\in E$ by the definition of the adjacency graph, and $\gamma_a$ and $\gamma_b$ are adjacent, contradicting that distinct connected components are nonadjacent.

    Since $\mathbf{h}_X^{(i)}\cdot \one_{\gamma_a}$ is the parity of $\abs{\mathrm{supp}(\mathbf h_X^{(i)})\cap\gamma_a}$, at most one term on the right-hand side of Eq.~\eqref{eq:syndrome-split} is nonzero.
    If $\mathbf cH_X^\top=\bm 0$, the left-hand side of Eq.~\eqref{eq:syndrome-split} is zero, and a sum over $\ff$ with at most one nonzero term is zero only when every term is zero.
    Hence $\mathbf h_X^{(i)}\cdot\one_{\gamma_a}=0$ for every $a$, and since the row is arbitrary, $\one_{\gamma_a}H_X^\top=\bm 0$ for every $a$.

    \end{proof}
    
    We now specify the polymers, motivated by Lemma~\ref{lem:component-zero-syndrome} and by the decompositions of errors into connected subsets used in analyses of QLDPC codes~\cite{KovalevPryadko2013FaultTolerance,Gottesman2014constant}.
    A polymer is a nonempty subset $\gamma\subseteq[n]$ that is connected in $\calG_X$ and whose indicator vector satisfies $\onegamma H_X^\top=0$.
    We say two polymers are compatible if they are disjoint and nonadjacent in $\calG_X$.
    For each $\bmalpha,\bmmu\in\ff^k$ and each polymer $\gamma$, define the activity of $\gamma$ as the function $w_{\gamma}^{\bmalpha|\bmmu}\colon \ff^{m_X}\to\cc$ given by
    \begin{equation}
    \label{eq:activity-def}
        w_\gamma^{\bmalpha|\bmmu}(\xx)\coloneqq \prod_{j\in\gamma}\qty[\ii\tan(2\theta_j)(-1)^{(\bmalpha O_X)_j}I^{\bmmu}_j(\xx)].
    \end{equation}
    Polymer activities defined as functions are also used in Refs.~\cite{Dimock_2000,Lohmann_2015}.
    The polymers, the activities, and the compatibility relation defined above form an abstract polymer model, since the relation is symmetric in the two polymers and every polymer is nonempty and hence not disjoint from itself.

    Finally, we show that the zero-syndrome vectors correspond one to one to the finite sets of pairwise compatible polymers, and then rewrite the sum over $\ccc$ in Eq.~\eqref{Xi-def} as the partition function.
        For a zero-syndrome vector $\ccc$, let $\gamma_1,\ldots,\gamma_M$ denote the connected components of $\mathrm{supp}(\ccc)$ in $\calG_X$, so that $\ccc=\sum_{a=1}^{M}\mathbf{1}_{\gamma_a}$ and each $\mathbf{1}_{\gamma_a}$ is a zero-syndrome vector by Lemma~\ref{lem:component-zero-syndrome}.
    Each component is nonempty and connected, hence a polymer, and distinct components are disjoint and nonadjacent, hence compatible.
     Thus $\{\gamma_1,\ldots,\gamma_M\}$ is a finite set of pairwise compatible polymers.
    Conversely, for every finite set $\Gamma$ of pairwise compatible polymers, the vector $\ccc=\sum_{\gamma\in\Gamma}\mathbf{1}_\gamma$ is a zero-syndrome vector since each $\mathbf{1}_\gamma$ is a zero-syndrome vector, the support of $\ccc$ is the union of the polymers in $\Gamma$ since the polymers in $\Gamma$ are disjoint, and the connected components of $\mathrm{supp}(\ccc)$ are the elements of $\Gamma$ since these polymers are connected, disjoint, and nonadjacent.
    These two assignments are inverse to each other, with the empty set assigned to $\ccc=\bm0$.
    For a finite set $\Gamma$ of pairwise compatible polymers and the corresponding zero-syndrome vector $\ccc=\sum_{\gamma\in\Gamma}\mathbf{1}_\gamma$, the polymers in $\Gamma$ are disjoint and the union of the polymers in $\Gamma$ is $\mathrm{supp}(\ccc)$, so the term of $\ccc$ in Eq.~\eqref{Xi-def} is the product over $\gamma\in\Gamma$ of the activities of Eq.~\eqref{eq:activity-def}.
    Therefore $\Xi_{\bmalpha|\bmmu}$ is the partition function of this model,
    \begin{equation}
        \label{eq:polymer-gas}
        \Xi_{\bmalpha|\bmmu}
        =\sum_{\Gamma\colon \text{compatible}}\ \prod_{\gamma\in\Gamma}
        w^{\bmalpha|\bmmu}_\gamma,
    \end{equation}
    where the sum runs over the \emph{compatible sets} $\Gamma$, i.e., the finite sets of pairwise compatible polymers, and the empty set contributes the constant function one.

  \subsection{Bounding the activities}
    \label{subsec:activity-bound}
    We now bound the activities $w_{\gamma}^{\bmalpha|\bmmu}$ in terms of the Fourier $1$-norm of Eq.~\eqref{eq:spectral-norm}.
    For $j\in[n]$, let $\mathbf{h}_j\in\ff^{m_X}$ be the $j$-th row of $H_X^\top$, so $(\mathbf xH_X)_j=\mathbf x\cdot\mathbf h_j$.
    Then Eq.~\eqref{eq:Imu-def} is expressed as
    \begin{equation}
    I^{\bm\mu}_j(\xx)
    =\frac{1+(-1)^{(\bm\mu O_X)_j}\chi_{\mathbf h_j}(\xx)}{2},
    \label{eq:Imu-character}
    \end{equation}
    with the character $\chi_{\mathbf h_j}$ of Eq.~\eqref{eq:character}.
    The constant function one is the character $\chi_{\bm 0}$, and every character $\chi_{\bmxi}$ satisfies $\norm{\chi_{\bmxi}}_A=1$.
    The triangle inequality of Eq.~\eqref{eq:spectral-triangle} applied to Eq.~\eqref{eq:Imu-character} gives
    \begin{equation}
        \label{eq:I-wiener-bound}
    \norm{I^{\bm\mu}_j}_A
    \leq\frac{\norm{\chi_{\bm 0}}_A+\norm{\chi_{\mathbf h_j}}_A}{2}
    =1.
    \end{equation}
    By the submultiplicativity of \eqref{eq:spectral-submultiplicative}, Eqs.~\eqref{eq:I-wiener-bound} and~\eqref{eq: delta} give
    \begin{equation}
    \norm{w^{\bm\alpha|\bm\mu}_\gamma}_A
    \leq\prod_{j\in\gamma}\abs{\tan(2\theta_j)}\norm{I^{\bm\mu}_j}_A
    \leq\delta^{\abs{\gamma}},
    \label{eq:activity-bound}
    \end{equation}
    where $\abs{\gamma}$ denotes the number of data qubits in $\gamma$.
    Each data qubit contained in a polymer costs a factor of at most $\delta$ in the Fourier $1$-norm.

The bound in Eq.~\eqref{eq:activity-bound} gives an estimate for each activity, but summing these estimates over compatible sets may give an upper bound that grows exponentially with $n$.
In the following section, we use the cluster expansion to rewrite the sum over compatible sets in Eq.~\eqref{eq:polymer-gas} as the exponential of a series over clusters.
We bound the differences between these series before taking the exponential, rather than applying the triangle inequality directly to the compatible sets.

    \section{Cluster expansion}
    \label{sec:cluster}
    In this section, we prove Proposition~\ref{prop:average-bound} below, which bounds $\norm{g_{\bmalpha,\bmalpha+\bmmu}-g_{\mathrm{ave}|\bmmu}}_A$ by a constant multiple of $ne^{-bd_Z}$, and then use Eq.~\eqref{eq:master-reduction} to complete the proof of the main theorem.
    In Sec.~\ref{subsec:cluster-setup}, we introduce the cluster expansion of the abstract polymer model.
    In Sec.~\ref{subsec:kp-verify}, we state Theorem~\ref{thm:kp-rooted} and verify the Koteck\'y-Preiss condition for our activities in Lemma~\ref{lem:kp-verified}.
    In Sec.~\ref{subsec:average-proof}, we prove Proposition~\ref{prop:average-bound} and complete the proof of the main theorem.

     \subsection{Cluster expansion of abstract polymer models}
    \label{subsec:cluster-setup}
    The \textit{cluster expansion} for abstract polymer models~\cite{Koteck__1986_polymer,friedli_velenik_2017_polymer,Bissacot_2010_polymer} expresses the logarithm of the partition function.
    To express the logarithm of the partition function $\Xi_{\bmalpha|\bmmu}$ of Eq.~\eqref{eq:polymer-gas}, we introduce a tuple of polymers, the incompatibility graph, clusters, and the Ursell function as follows.
    Let $(\gamma_1,\ldots, \gamma_m)$ be a tuple of polymers, $m\geq 1$, with repetitions allowed.
    The \textit{incompatibility graph} of $(\gamma_1,\ldots, \gamma_m)$ is the graph with vertex set $\{1,\ldots, m\}$ and edge set $\{\{a,a'\}\colon 1\leq a <a'\leq m,\gamma_a\nsim \gamma_{a'}\}$.
    A \textit{cluster} is a tuple whose incompatibility graph is connected.
    A \textit{spanning subgraph} of a graph is a subgraph whose vertex set is the whole vertex set of the graph, so a spanning subgraph is specified by a subset of the edge set.
    The \textit{Ursell function} of a tuple $(\gamma_1, \ldots,\gamma_m)$ is
    \begin{equation}
    \label{eq:ursel-def}\phi^{\mathrm{T}}(\gamma_1, \ldots,\gamma_m)\coloneqq \sum_{G}(-1)^{|E(G)|},\end{equation}
    where $G$ runs over all connected spanning subgraphs of the incompatibility graph of the tuple and $E(G)$ denotes the edge set of $G$.
    For $m=1$, the incompatibility graph consists of a single vertex and no edges. 
    Its unique spanning subgraph is this graph itself, which is connected, so $\phi^{\mathrm{T}}=1$.
    If the tuple is not a cluster, no such subgraph exists and $\phi^{\mathrm{T}}=0$.

    For $\xx\in\ff^{m_X}$, consider the cluster series associated with Eq.~\eqref{eq:polymer-gas},
    \begin{equation}
    \label{eq:cluster-series}
        L_{\bmalpha|\bmmu}(\xx)\coloneqq \sum_{m\geq 1}\frac{1}{m!}\sum_{\gamma_1,\ldots,\gamma_m}\phi^{\mathrm T}(\gamma_1,\ldots, \gamma_m)\prod_{a=1}^m w_{\gamma_a}^{\bmalpha|\bmmu}(\xx),
    \end{equation}
    where the inner sum runs over ordered tuples of $m$ polymers and repetitions are allowed.
    Since $\phi^{\mathrm T}$ vanishes for tuples that are not clusters, only the clusters contribute to Eq.~\eqref{eq:cluster-series}.
    For a function $h\colon\ff^{m_X}\to\cc$, let $\exp(h)$ denote the function with $\exp(h)(\xx)\coloneqq e^{h(\xx)}$ for every $\xx\in\ff^{m_X}$.

     \subsection{Convergence under the Koteck\'y-Preiss condition}
    \label{subsec:kp-verify}
    In this subsection, we prove absolute convergence of the cluster series in Eq.~\eqref{eq:cluster-series} at every $\xx$ and derive the bound on the sum over clusters used in Sec.~\ref{subsec:average-proof}. 
    We first state the Koteck\'y-Preiss theorem for complex activities and then verify its condition using the activity bound of Eq.~\eqref{eq:activity-bound}.
\begin{theorem}[Koteck\'y-Preiss theorem~\cite{Koteck__1986_polymer,friedli_velenik_2017_polymer}]
\label{thm:kp-rooted}
Let $z_\gamma$ be a nonnegative real number for every polymer $\gamma$, and assume that every polymer $\gamma$ satisfies the Koteck\'y-Preiss condition
\begin{equation}
\label{eq:kp-functional}
\sum_{\gamma'\nsim\gamma}z_{\gamma'}e^{\abs{\gamma'}}\leq\abs{\gamma}.
\end{equation}
Then, for every family of complex activities $u=(u_\gamma)_\gamma$ satisfying $\abs{u_\gamma}\leq z_\gamma$, the cluster series
\begin{equation}
\label{eq:scalar-cluster-series}
\ell(u)\coloneqq\sum_{m\geq1}\frac1{m!}
\sum_{\gamma_1,\ldots,\gamma_m}
\phi^{\mathrm T}(\gamma_1,\ldots,\gamma_m)
\prod_{a=1}^m u_{\gamma_a}
\end{equation}
converges absolutely and satisfies
\begin{equation}
\label{eq:scalar-partition-function}
\Xi(u)\coloneqq\sum_{\Gamma}\ \prod_{\gamma\in\Gamma}u_\gamma
=\exp\qty(\ell(u)),
\end{equation}
where $\Gamma$ runs over all finite sets of pairwise compatible polymers, including the empty set, whose contribution is one.
In addition, every polymer $\gamma_1$ satisfies
\begin{equation}
\label{eq:kp-rooted}
\sum_{m\geq1}\frac1{(m-1)!}\sum_{\gamma_2,\ldots,\gamma_m}
\abs{\phi^{\mathrm T}(\gamma_1,\gamma_2,\ldots,\gamma_m)}\prod_{a=2}^m z_{\gamma_a}
\leq e^{\abs{\gamma_1}},
\end{equation}
with the $m=1$ term equal to one, where the inner sum runs over ordered tuples of polymers with repetitions allowed.
\end{theorem}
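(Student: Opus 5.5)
The plan is the standard route of Friedli and Velenik~\cite{friedli_velenik_2017_polymer}, specialized to the weight $a(\gamma)=\abs{\gamma}$. The polymer set in our application is finite, so I will assume finiteness throughout; this makes $\Xi(u)$ a polynomial in the variables $(u_\gamma)_\gamma$. The argument has three parts: a combinatorial estimate of the Ursell function, an inductive tree bound giving Eq.~\eqref{eq:kp-rooted}, and an analytic-continuation argument giving Eq.~\eqref{eq:scalar-partition-function}.

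\textbf{Step 1 (tree-graph bound).} I will invoke the Penrose tree-graph inequality: for every tuple $(\gamma_1,\ldots,\gamma_m)$, $\abs{\phi^{\mathrm T}(\gamma_1,\ldots,\gamma_m)}$ is at most the number of spanning trees of its incompatibility graph. This follows from a partition scheme on connected spanning subgraphs. Fix a root and order the vertices; map each connected subgraph to its Penrose tree. Each fiber is then a Boolean interval $[T,R(T)]$. The alternating sum over a fiber vanishes unless $R(T)=T$, so the Ursell function is a signed count of such trees. Substituting this bound into the left-hand side of Eq.~\eqref{eq:kp-rooted}, the factor $1/(m-1)!$ cancels the orderings of the non-root vertices. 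The left-hand side is then bounded by a sum over rooted trees with root $\gamma_1$, where every parent–child pair is incompatible and each non-root vertex $\gamma'$ contributes $z_{\gamma'}$.

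\textbf{Step 2 (induction on tree depth).} This is the step where the Koteck\'y-Preiss condition, Eq.~\eqref{eq:kp-functional}, is used. Let $T_N(\gamma)$ denote the tree sum restricted to trees of depth at most $N$ rooted at $\gamma$, so that $T_0(\gamma)=1$. Decomposing at the root, whose $k$ children are unordered and hence carry a factor $1/k!$, gives
\begin{equation*}
T_{N}(\gamma)\le\sum_{k\ge0}\frac1{k!}\Bigl(\sum_{\gamma'\nsim\gamma}z_{\gamma'}T_{N-1}(\gamma')\Bigr)^k
=\exp\Bigl(\sum_{\gamma'\nsim\gamma}z_{\gamma'}T_{N-1}(\gamma')\Bigr).
\end{equation*}
Assume inductively that $T_{N-1}(\gamma')\le e^{\abs{\gamma'}}$. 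By Eq.~\eqref{eq:kp-functional}, the exponent is then at most $\abs{\gamma}$, so $T_N(\gamma)\le e^{\abs{\gamma}}$. Letting $N\to\infty$ by monotone convergence yields Eq.~\eqref{eq:kp-rooted}. The main care needed is the bookkeeping relating ordered tuples with repetitions to labelled trees; that is where I expect any slip to occur.

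\textbf{Step 3 (convergence and the identity).} For absolute convergence of $\ell(u)$, I will bound the $m$-th term by fixing $\gamma_1$ and using $\abs{u_\gamma}\le z_\gamma$. Equation~\eqref{eq:kp-rooted} then bounds the whole series by $\sum_{\gamma_1}z_{\gamma_1}e^{\abs{\gamma_1}}<\infty$. The same bound holds uniformly on the closed polydisc $D=\{\abs{u_\gamma}\le z_\gamma\}$, so $\ell$ is analytic on the interior of $D$ and continuous up to its boundary.

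For the identity $\Xi=\exp(\ell)$, I will first establish it as an identity of formal power series. Expanding $\log\Xi$ by the Mayer trick, write $\prod_{a<a'}\mathbf 1[\gamma_a\sim\gamma_{a'}]=\prod(1+(\mathbf 1-1))$ and sum over all tuples; the connected parts exponentiate, leaving exactly the Ursell-weighted cluster sum. Both sides converge near $u=0$, so they agree there. Since $\exp(\ell)$ is analytic on the connected interior of $D$ and $\Xi$ is a polynomial, the identity theorem extends the equality to the interior, and continuity extends it to all of $D$. Note that this also shows $\Xi\neq0$ on $D$.
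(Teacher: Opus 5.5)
Your proposal is correct and is essentially the same argument as the paper's: the paper does not prove this theorem itself but cites Proposition~5.3 and Theorem~5.4 of Friedli--Velenik with $a(\gamma)=\abs{\gamma}$ (their Ursell function being $\phi^{\mathrm T}/m!$), and your Penrose tree-graph bound, tree induction at the root, and formal exponential-formula identity extended by analyticity are essentially the standard proof given there. One small detail to patch is that if some $z_\gamma=0$ the polydisc $D$ has empty interior, so either discard those polymers (they contribute to neither $\Xi$ nor $\ell$, and the Koteck\'y--Preiss condition is unaffected) or run the identity-theorem argument in the single variable $\lambda\mapsto\Xi(\lambda u)$, $\exp(\ell(\lambda u))$ on $\abs{\lambda}\le1$.
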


Theorem~\ref{thm:kp-rooted} follows from Proposition~5.3 and Theorem~5.4 of Ref.~\cite{friedli_velenik_2017_polymer} for our compatibility relation with the positive function $a(\gamma)=\abs{\gamma}$.
The Ursell function in that reference equals $\phi^{\mathrm T}(\gamma_1,\ldots,\gamma_m)/m!$.
Applying Theorem~5.4 of Ref.~\cite{friedli_velenik_2017_polymer} to the activities $z_\gamma$ gives Eq.~\eqref{eq:kp-rooted}.

For our activities, let $b>0$ and define the majorants
\begin{equation}
\label{eq:zgamma-def}
z_\gamma=(\delta e^b)^{\abs{\gamma}}.
\end{equation}
They are majorants of the activities since $\norm{w_\gamma^{\bmalpha|\bmmu}}_A\leq\delta^{\abs{\gamma}}=e^{-b|\gamma|}z_{\gamma}\leq z_\gamma$ by Eq.~\eqref{eq:activity-bound}.
The factor $e^{-b\abs{\gamma}}$ is introduced to obtain the factor $e^{-bd_Z}$ in Sec.~\ref{subsec:average-proof}.
Verifying Eq.~\eqref{eq:kp-functional} for these majorants requires counting the polymers of each size that contain a fixed qubit.
The same counting also bounds the sum $\sum_\gamma z_\gamma e^{\abs{\gamma}}$ used below.

By Lemma~2 of Ref.~\cite{Gottesman2014constant}, the number of connected vertex subsets of size $m$ that contain a fixed qubit is at most $(e\Delta)^{m-1}$.
Since every polymer is a connected subset of $\calG_X$, the number of size-$m$ polymers containing a fixed qubit is at most $(e\Delta)^{m-1}$.
This factor depends only on the maximum degree $\Delta$, not on $n$.
By Eq.~\eqref{eq: delta bound}, the row and column weight bounds of $(r,c)$-QLDPC code bound $\Delta$ by the constant $c(r-1)$, independently of $n$.

The following lemma shows that Eq.~\eqref{eq:kp-condition} implies Eq.~\eqref{eq:kp-functional}, which is the assumption of Theorem~\ref{thm:kp-rooted}, and also gives the bound in Eq.~\eqref{eq:polymer-sum-bound}.
\begin{lemma}
\label{lem:kp-verified}
Let $\Delta\geq1$ be an upper bound on the maximum degree of $\calG_X$.
Let $b>0$, and let $\delta\geq0$ be given by Eq.~\eqref{eq: delta}.
For every polymer $\gamma$, let $z_\gamma=(\delta e^b)^{\abs{\gamma}}$, and define
\begin{equation}
\label{eq:xb-def}
x_b=\delta e^{b+1}.
\end{equation}
Assume
\begin{equation}
\label{eq:kp-condition}
x_b<\frac1{(e+1)\Delta+1}.
\end{equation}
This condition is equivalent to $\delta<e^{-(b+1)}/[(e+1)\Delta+1]$.
Then every polymer $\gamma$ satisfies 
\begin{equation}
\sum_{\gamma'\nsim\gamma}z_{\gamma'}e^{\abs{\gamma'}}\leq\abs{\gamma},
\end{equation}
which is Eq.~\eqref{eq:kp-functional}, and
\begin{equation}
\label{eq:polymer-sum-bound}
\sum_{\gamma}z_\gamma e^{\abs{\gamma}}\leq n\frac{x_b}{1-e\Delta x_b}.
\end{equation}
\end{lemma}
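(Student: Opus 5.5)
The plan is to use a direct counting argument: bound both sums by summing over the qubits a polymer must contain, and then apply the bound of Lemma~2 of Ref.~\cite{Gottesman2014constant} on connected subsets containing a fixed qubit. The key observation is that each weighted term factorizes. For a polymer $\gamma'$ with $\abs{\gamma'}=m$,
\begin{equation*}
z_{\gamma'}e^{\abs{\gamma'}}=(\delta e^{b}e)^{m}=x_b^{m},
\end{equation*}
so only the number of polymers of each size in the relevant region matters. Since $\Delta\geq1$, Eq.~\eqref{eq:kp-condition} gives $e\Delta x_b<\frac{e\Delta}{(e+1)\Delta+1}<1$. Hence all geometric series below converge.

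First, I will establish a per-vertex bound. Every polymer containing a fixed qubit $v$ is a connected subset of $\calG_X$, so there are at most $(e\Delta)^{m-1}$ polymers of size $m$ containing $v$. Therefore
\begin{equation*}
\sum_{\gamma'\ni v}z_{\gamma'}e^{\abs{\gamma'}}\leq\sum_{m\geq1}(e\Delta)^{m-1}x_b^{m}=\frac{x_b}{1-e\Delta x_b}.
\end{equation*}
Summing this over all $v\in[n]$ counts every polymer at least once, and all terms are nonnegative. This already gives Eq.~\eqref{eq:polymer-sum-bound}.

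Next, I will handle the Koteck\'y--Preiss sum. A polymer $\gamma'\nsim\gamma$ either intersects $\gamma$ or is adjacent to it. In both cases $\gamma'$ contains a vertex of the closed neighborhood $N[\gamma]\coloneqq\gamma\cup\{j'\colon\{j,j'\}\in E\ \text{for some}\ j\in\gamma\}$. This set satisfies $\abs{N[\gamma]}\leq(\Delta+1)\abs{\gamma}$, because each vertex of $\gamma$ contributes itself and at most $\Delta$ neighbors. Bounding the sum over incompatible $\gamma'$ by the sum over $v\in N[\gamma]$ of the per-vertex bound gives
\begin{equation*}
\sum_{\gamma'\nsim\gamma}z_{\gamma'}e^{\abs{\gamma'}}\leq(\Delta+1)\abs{\gamma}\frac{x_b}{1-e\Delta x_b}.
\end{equation*}
The right-hand side is at most $\abs{\gamma}$ if and only if $(\Delta+1)x_b\leq1-e\Delta x_b$, that is, $x_b[(e+1)\Delta+1]\leq1$. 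This holds by Eq.~\eqref{eq:kp-condition}.

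Finally, the stated equivalence with $\delta<e^{-(b+1)}/[(e+1)\Delta+1]$ follows by substituting Eq.~\eqref{eq:xb-def}. The only step needing care is the neighborhood covering: incompatibility means "not (disjoint and nonadjacent)", so overlapping and adjacent polymers must both be covered, and the closed neighborhood handles both. No step presents a real obstacle.
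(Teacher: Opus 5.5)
Your proposal is correct and follows essentially the same route as the paper: the per-qubit count of connected subsets bounded by $(e\Delta)^{m-1}$, the observation that $z_{\gamma'}e^{\abs{\gamma'}}=x_b^{m}$, the covering of incompatible polymers by the at most $(\Delta+1)\abs{\gamma}$ qubits of the closed neighborhood, and the sum over all $n$ qubits for Eq.~\eqref{eq:polymer-sum-bound}. You also state explicitly why the geometric series converges, $e\Delta x_b<1$, which the paper leaves implicit in Eq.~\eqref{eq:kp-condition}.
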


\begin{proof}
We first verify Eq.~\eqref{eq:kp-functional}.
A polymer $\gamma'$ incompatible with a polymer $\gamma$ shares a qubit with $\gamma$, or contains a qubit $j'$ with $\{j,j'\}\in E$ for some $j\in\gamma$, so $\gamma'$ contains one of the qubits $j'$ with $j'\in\gamma$ or $\{j,j'\}\in E$ for some $j\in\gamma$, whose number is at most $(\Delta+1)\abs{\gamma}$ since $\deg(j)\leq\Delta$ for every $j\in\gamma$.
Summing over these qubits and using the bound $\qty(e\Delta)^{m-1}$ on the number of size-$m$ polymers containing a fixed qubit gives
\begin{align}
\label{eq:kp-check}
\sum_{\gamma'\nsim\gamma}z_{\gamma'}e^{\abs{\gamma'}}
&\leq(\Delta+1)\abs{\gamma}\sum_{m\geq1}(e\Delta)^{m-1}\qty(\delta e^{b+1})^m
\nonumber\\
&=(\Delta+1)\abs{\gamma}\frac{x_b}{1-e\Delta x_b}\nonumber\\
&<\abs{\gamma},
\end{align}
where Eq.~\eqref{eq:kp-condition} gives $e\Delta x_b<1$ and $(\Delta+1)x_b/(1-e\Delta x_b)<1$.

We next prove Eq.~\eqref{eq:polymer-sum-bound}.
Every polymer contains at least one qubit, so the same counting gives
\begin{equation}
\sum_{\gamma}z_\gamma e^{\abs{\gamma}}
\leq\sum_{j=1}^{n}\sum_{m\geq1}(e\Delta)^{m-1}\qty(\delta e^{b+1})^{m}
=n\frac{x_b}{1-e\Delta x_b}.
\end{equation}
\end{proof}

Under Eq.~\eqref{eq:kp-condition}, Lemma~\ref{lem:kp-verified} shows that the majorants satisfy the assumption of Theorem~\ref{thm:kp-rooted}.
For every $\bmalpha,\bmmu\in\ff^k$ and $\xx\in\ff^{m_X}$, Eqs.~\eqref{eq:pointwise-bound}, \eqref{eq:activity-bound}, and~\eqref{eq:zgamma-def} give
\begin{equation*}
\abs{w_\gamma^{\bmalpha|\bmmu}(\xx)}
\leq\norm{w_\gamma^{\bmalpha|\bmmu}}_A
\leq\delta^{\abs{\gamma}}\leq z_\gamma.
\end{equation*}
With $u_\gamma=w_\gamma^{\bmalpha|\bmmu}(\xx)$, the cluster series $\ell(u)$ of Eq.~\eqref{eq:scalar-cluster-series} equals $L_{\bmalpha|\bmmu}(\xx)$ of Eq.~\eqref{eq:cluster-series}, and the partition function $\Xi(u)$ of Eq.~\eqref{eq:scalar-partition-function} equals $\Xi_{\bmalpha|\bmmu}(\xx)$ by Eq.~\eqref{eq:polymer-gas}.
Thus Theorem~\ref{thm:kp-rooted} shows that Eq.~\eqref{eq:cluster-series} converges absolutely at every $\xx$ and that its sum satisfies
\begin{equation}
\label{eq:Xi-exp-L}
\Xi_{\bmalpha|\bmmu}(\xx)=\exp\qty(L_{\bmalpha|\bmmu}(\xx)).
\end{equation}
Since $\ff^{m_X}$ is finite, pointwise convergence of functions on $\ff^{m_X}$ implies convergence of every Fourier coefficient and hence convergence in the Fourier $1$-norm.

Grouping the tuples by their first entry $\gamma_1$ and using $1/m!\leq1/(m-1)!$ gives
\begin{align}
&\sum_{m\geq1}\frac1{m!}
\sum_{\gamma_1,\ldots,\gamma_m}
\abs{\phi^{\mathrm T}(\gamma_1,\ldots,\gamma_m)}
\prod_{a=1}^m z_{\gamma_a}\nonumber\\
&\leq\sum_{\gamma_1}z_{\gamma_1}
\sum_{m\geq1}\frac1{(m-1)!}
\sum_{\gamma_2,\ldots,\gamma_m}
\abs{\phi^{\mathrm T}(\gamma_1,\ldots,\gamma_m)}
\prod_{a=2}^m z_{\gamma_a}\nonumber\\
&\leq\sum_\gamma z_\gamma e^{\abs{\gamma}}\nonumber\\
&\leq n\frac{x_b}{1-e\Delta x_b},
\label{eq:cs-sum-bound}
\end{align}
where the second inequality uses Eq.~\eqref{eq:kp-rooted} and the third uses Eq.~\eqref{eq:polymer-sum-bound}.
Thus Eq.~\eqref{eq:cs-sum-bound} bounds the sum over clusters by a constant multiple of $n$.
The remaining step compares the cluster series for different $\bmalpha$ and retains only the clusters with a nontrivial effect on the encoded information, using $d_Z$.

\subsection{Proof of the main theorem}
\label{subsec:average-proof}

For a tuple $(\gamma_1,\ldots,\gamma_m)$ of polymers with repetitions allowed, we call the logical class $(\sum_{a=1}^m {\onegamma}_a)O_X^{\top}\in\ff^{k}$ of the sum of the indicator vectors of its polymers the \textit{total logical class} of the tuple, and $\sum_{a=1}^m |\gamma_a|$ its \textit{total size}.
For each $\bmalpha'\neq\bmalpha$, the terms whose total logical class is zero cancel in the difference $L_{\bmalpha'|\bmmu}-L_{\bmalpha|\bmmu}$.
For every remaining tuple, the sum in $\ff^n$ of the indicator vectors of its polymers has zero syndrome and nonzero logical class, so the total size of the tuple is at least $d_Z$.

The following proposition combines this distance bound
with Eq.~\eqref{eq:cs-sum-bound} to obtain the required
bound on $\norm{g_{\bmalpha,\bmalpha+\bmmu}-g_{\mathrm{ave}|\bmmu}}_A$.
\begin{proposition}
\label{prop:average-bound}
Let $\Delta\geq1$ be an upper bound on the maximum degree of $\calG_X$, and let $b>0$ satisfy Eq.~\eqref{eq:kp-condition}.
Then, for all $\bmalpha,\bmmu\in\ff^k$,
\begin{equation}
\label{eq:average-difference-bound}
\norm{g_{\bmalpha,\bmalpha+\bmmu}-g_{\mathrm{ave}|\bmmu}}_A
\leq\qty(1-2^{-k})C_1 ne^{-bd_Z},
\end{equation}
with $C_1=\max\qty{e^{C_0}-1,2}$ and $C_0=2x_b/(1-e\Delta x_b)$, where $x_b$ is given by Eq.~\eqref{eq:xb-def}.
\end{proposition}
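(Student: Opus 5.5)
The plan is to write the difference as an average of pairwise differences, factor out $g_{\bmalpha,\bmalpha+\bmmu}$, and control the remaining factor with the cluster expansion; only clusters of total size at least $d_Z$ survive.

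\emph{Step 1: reduce to a pairwise difference.} By Eq.~\eqref{eq:g-average-def},
\begin{equation*}
g_{\bmalpha,\bmalpha+\bmmu}-g_{\mathrm{ave}|\bmmu}=2^{-k}\sum_{\bmalpha'\neq\bmalpha}\bigl(g_{\bmalpha,\bmalpha+\bmmu}-g_{\bmalpha',\bmalpha'+\bmmu}\bigr).
\end{equation*}
The triangle inequality then produces the factor $(1-2^{-k})$. It therefore suffices to show, for each $\bmalpha'\neq\bmalpha$, that
\begin{equation*}
\norm{g_{\bmalpha,\bmalpha+\bmmu}-g_{\bmalpha',\bmalpha'+\bmmu}}_A\leq C_1ne^{-bd_Z}.
\end{equation*}

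\emph{Step 2: the easy case.} If $ne^{-bd_Z}\geq1$, this follows directly from Eq.~\eqref{eq:g-spectral-bound}. Each term has Fourier $1$-norm at most one, so the difference is at most $2\leq C_1ne^{-bd_Z}$.

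\emph{Step 3: factorization.} Assume $t\coloneqq ne^{-bd_Z}<1$. By Lemma~\ref{lem:g-factorized} and Eq.~\eqref{eq:Xi-exp-L}, we have $g_{\bmalpha',\bmalpha'+\bmmu}=C_{\bmmu}\exp(L_{\bmalpha'|\bmmu})$, and $\exp(L_{\bmalpha'|\bmmu})=\exp(L_{\bmalpha|\bmmu})\exp(D)$ pointwise, where $D\coloneqq L_{\bmalpha'|\bmmu}-L_{\bmalpha|\bmmu}$. Hence
\begin{equation*}
g_{\bmalpha,\bmalpha+\bmmu}-g_{\bmalpha',\bmalpha'+\bmmu}=g_{\bmalpha,\bmalpha+\bmmu}\bigl(1-\exp(D)\bigr).
\end{equation*}
Submultiplicativity~\eqref{eq:spectral-submultiplicative} together with $\norm{g_{\bmalpha,\bmalpha+\bmmu}}_A\leq1$ reduces the task to bounding $\norm{\exp(D)-1}_A$. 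Expanding the exponential as a power series (finitely many Fourier coefficients, so convergence is pointwise and hence in $\norm{\cdot}_A$) and applying~\eqref{eq:spectral-submultiplicative} termwise gives
\begin{equation*}
\norm{\exp(D)-1}_A\leq e^{\norm{D}_A}-1.
\end{equation*}

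\emph{Step 4: bounding $\norm{D}_A$ (the main step).} From Eq.~\eqref{eq:activity-def}, $w_\gamma^{\bmalpha|\bmmu}=(-1)^{\bmalpha\cdot(\onegamma O_X^\top)}\,w_\gamma^{\bm0|\bmmu}$. So the product of activities over a tuple equals $(-1)^{\bmalpha\cdot\uu}$ times an $\bmalpha$-independent function, where $\uu$ is the total logical class of the tuple. In $D$, tuples with $\uu=\bm0$ cancel. Every remaining tuple has zero syndrome and nonzero logical class, so its total size is at least $d_Z$. Using $\norm{w_\gamma}_A\leq\delta^{\abs{\gamma}}=e^{-b\abs{\gamma}}z_\gamma$, the surviving terms satisfy
\begin{equation*}
\prod_a\norm{w_{\gamma_a}}_A\leq e^{-bd_Z}\prod_a z_{\gamma_a}.
\end{equation*}
Applying the triangle inequality to the absolutely convergent series (two series, hence a factor $2$) and then Eq.~\eqref{eq:cs-sum-bound} yields
\begin{equation*}
\norm{D}_A\leq 2e^{-bd_Z}\,\frac{nx_b}{1-e\Delta x_b}=C_0t.
\end{equation*}

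\emph{Step 5: conclusion.} Since $t\mapsto e^{C_0t}-1$ is convex and vanishes at $0$, for $t\in[0,1]$ it lies below its chord, so $e^{C_0t}-1\leq(e^{C_0}-1)t\leq C_1t$, as required.

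The delicate point is Step 4. One must justify exchanging the Fourier $1$-norm with the infinite cluster sum, which follows from absolute convergence of the majorant series and finiteness of $\ff^{m_X}$. One must also check that the sign factor really depends only on the total logical class, which holds because $(\bmalpha O_X)\cdot\sum_a\mathbf 1_{\gamma_a}=\bmalpha\cdot\uu$.
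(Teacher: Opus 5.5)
Your proposal is correct and follows the paper's proof essentially step for step. The shared steps are the decomposition into an average of pairwise differences; the factorization $g_{\bmalpha,\bmalpha+\bmmu}\bigl(1-\exp(L_{\bmalpha'|\bmmu}-L_{\bmalpha|\bmmu})\bigr)$ with $\norm{\exp(h)-1}_A\leq e^{\norm{h}_A}-1$; the cancellation of clusters with zero total logical class, using $\sum_a\abs{\gamma_a}\geq\wt\bigl(\sum_a\one_{\gamma_a}\bigr)\geq d_Z$; and the convexity versus trivial-bound case split. The only differences are cosmetic: you split into $t<1$ and $t\geq1$ per pair rather than after averaging, and you take $\bmalpha=\bm0$ rather than $\bmalpha$ as the reference activity.
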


\begin{proof}
Fix $\bmalpha,\bmmu\in\ff^k$. We first bound $L_{\bmalpha'|\bmmu}-L_{\bmalpha|\bmmu}$ for an arbitrary $\bmalpha'\in\ff^k$.
For every $\bmalpha'\in\ff^k$ and every polymer $\gamma$, Eq.~\eqref{eq:activity-def} gives
\begin{equation}
\label{eq:single-polymer-character}
w_\gamma^{\bmalpha'|\bmmu}
=(-1)^{(\bmalpha'+\bmalpha)\cdot(\onegamma O_X^\top)}
w_\gamma^{\bmalpha|\bmmu}.
\end{equation}
Multiplying Eq.~\eqref{eq:single-polymer-character} over the polymers of a tuple $(\gamma_1,\ldots,\gamma_m)$ gives
\begin{align}
\label{eq:tuple-character}
\prod_{a=1}^m w_{\gamma_a}^{\bmalpha'|\bmmu}
={}&(-1)^{(\bmalpha'+\bmalpha)\cdot
\qty[\qty(\sum_{a=1}^m\one_{\gamma_a})O_X^\top]}
\prod_{a=1}^m w_{\gamma_a}^{\bmalpha|\bmmu},
\end{align}
where the sum of the indicator vectors is taken in $\ff^n$.
By absolute convergence of Eq.~\eqref{eq:cluster-series}, the two series may be subtracted term by term.
Equations~\eqref{eq:cluster-series} and~\eqref{eq:tuple-character} give
\begin{align}
\label{eq:L-difference-cluster-sum}
&L_{\bmalpha'|\bmmu}-L_{\bmalpha|\bmmu}\nonumber\\
&=\sum_{m\geq1}\frac1{m!}
\sum_{\gamma_1,\ldots,\gamma_m}
\phi^{\mathrm T}(\gamma_1,\ldots,\gamma_m)\nonumber\\
&\quad\times\qty[(-1)^{(\bmalpha'+\bmalpha)\cdot
\qty[\qty(\sum_{a=1}^m\one_{\gamma_a})O_X^\top]}-1]
\prod_{a=1}^m w_{\gamma_a}^{\bmalpha|\bmmu}.
\end{align}
Only the tuples satisfying
\begin{equation}
(\bmalpha'+\bmalpha)\cdot
\qty[\qty(\sum_{a=1}^m\one_{\gamma_a})O_X^\top]=1
\end{equation}
contribute to Eq.~\eqref{eq:L-difference-cluster-sum}.
In particular, every contributing tuple has nonzero total logical class.
For every tuple with nonzero total logical class, define $\ccc\coloneqq\sum_{a=1}^m\one_{\gamma_a}\in\ff^n$.
Every polymer satisfies $\one_{\gamma_a}H_X^\top=\bm0$, so
\begin{equation}
\ccc H_X^\top=\sum_{a=1}^m\one_{\gamma_a}H_X^\top=\bm0.
\end{equation}
The nonzero total logical class gives $\ccc O_X^\top\neq\bm0$, and hence Eq.~\eqref{eq:dz-def} gives $\wt(\ccc)\geq d_Z$.
The support of $\ccc$ is contained in the union of the polymers in the tuple, and the union has at most $\sum_{a=1}^m \abs{\gamma_a}$ elements, so
\begin{equation}
\label{eq:tuple-size-distance}
\sum_{a=1}^m\abs{\gamma_a}\geq\wt(\ccc)\geq d_Z.
\end{equation}

By Eq.~\eqref{eq:activity-bound} and the definition $z_\gamma=(\delta e^b)^{\abs{\gamma}}$ of Eq.~\eqref{eq:zgamma-def},
\begin{equation}
\norm{w_{\gamma_a}^{\bmalpha|\bmmu}}_A
\leq\delta^{\abs{\gamma_a}}
=z_{\gamma_a}e^{-b\abs{\gamma_a}}.
\end{equation}
We now apply the triangle inequality of Eq.~\eqref{eq:spectral-triangle} and the submultiplicativity of Eq.~\eqref{eq:spectral-submultiplicative} to the finite sums with $m\leq M$ in Eq.~\eqref{eq:L-difference-cluster-sum}, extending the sum to every tuple with nonzero total logical class.
These finite sums converge pointwise to $L_{\bmalpha'|\bmmu}-L_{\bmalpha|\bmmu}$, and hence in the Fourier $1$-norm because $\ff^{m_X}$ is finite. 
Taking $M\to\infty$ therefore gives
\begin{align}
&\norm{L_{\bmalpha'|\bmmu}-L_{\bmalpha|\bmmu}}_A\nonumber\\
&\leq2
\sum_{m\geq1}\frac1{m!}
\sum_{\substack{\gamma_1,\ldots,\gamma_m\\
\qty(\sum_{a=1}^m\one_{\gamma_a})O_X^\top\neq\bm0}}
\abs{\phi^{\mathrm T}(\gamma_1,\ldots,\gamma_m)}\nonumber\\
&\qquad\times\prod_{a=1}^m z_{\gamma_a}e^{-b\abs{\gamma_a}}\nonumber\\
&\leq2e^{-bd_Z}
\sum_{m\geq1}\frac1{m!}
\sum_{\gamma_1,\ldots,\gamma_m}
\abs{\phi^{\mathrm T}(\gamma_1,\ldots,\gamma_m)}
\prod_{a=1}^m z_{\gamma_a}\nonumber\\
&\leq2e^{-bd_Z}\sum_\gamma z_\gamma e^{\abs{\gamma}}\nonumber\\
&\leq\frac{2x_b}{1-e\Delta x_b} ne^{-bd_Z}
\eqqcolon2\varepsilon,
\label{eq:L-difference-bound}
\end{align}
where in the second inequality Eq.~\eqref{eq:tuple-size-distance} gives $\prod_{a=1}^m e^{-b\abs{\gamma_a}}=e^{-b\sum_{a=1}^m\abs{\gamma_a}}\leq e^{-bd_Z}$ and the condition
$(\sum_{a=1}^m\one_{\gamma_a})O_X^\top\neq\bm0$
can then be removed because all terms are nonnegative, the third inequality uses Eq.~\eqref{eq:cs-sum-bound}, and the fourth uses Eq.~\eqref{eq:polymer-sum-bound}.

We next convert Eq.~\eqref{eq:L-difference-bound} into a bound on $g_{\bmalpha,\bmalpha+\bmmu}-g_{\mathrm{ave}|\bmmu}$ through Lemma~\ref{lem:g-factorized} and Eq.~\eqref{eq:Xi-exp-L}.
For all $h\colon\ff^{m_X}\to\cc$, the series $\sum_{m\geq1}h(\xx)^m/m!$ converges to $e^{h(\xx)}-1$ at every $\xx$.
The finite sums $\sum_{m=1}^M h^m/m!$ therefore converge to $\exp(h)-1$ in the Fourier $1$-norm because $\ff^{m_X}$ is finite.
Applying Eqs.~\eqref{eq:spectral-homogeneous}, \eqref{eq:spectral-triangle}, and~\eqref{eq:spectral-submultiplicative} to the finite sums $\sum_{m=1}^M h^m/m!$ and taking $M\to\infty$ gives
\begin{equation}
\label{eq:exp-norm-bound}
\norm{\exp(h)-1}_A
\leq\sum_{m\geq1}\frac{\norm{h}_A^m}{m!}
=e^{\norm{h}_A}-1.
\end{equation}

The definition of $g_{\mathrm{ave}|\bmmu}$ gives
\begin{align}
&g_{\bmalpha,\bmalpha+\bmmu}-g_{\mathrm{ave}|\bmmu}\nonumber\\
&=2^{-k}\sum_{\substack{\bmalpha'\in\ff^k\\\bmalpha'\neq\bmalpha}}
\qty(g_{\bmalpha,\bmalpha+\bmmu}-g_{\bmalpha',\bmalpha'+\bmmu})\nonumber\\
&=2^{-k}\sum_{\substack{\bmalpha'\in\ff^k\\\bmalpha'\neq\bmalpha}}
C_{\bmmu}\qty[\exp\qty(L_{\bmalpha|\bmmu})-\exp\qty(L_{\bmalpha'|\bmmu})]\nonumber\\
&=2^{-k}\sum_{\substack{\bmalpha'\in\ff^k\\\bmalpha'\neq\bmalpha}}
g_{\bmalpha,\bmalpha+\bmmu}
\qty[1-\exp\qty(L_{\bmalpha'|\bmmu}-L_{\bmalpha|\bmmu})],
\label{eq:g-average-L}
\end{align}
where the term with $\bmalpha'=\bmalpha$ is zero in the first equality, the second equality uses Lemma~\ref{lem:g-factorized} and Eq.~\eqref{eq:Xi-exp-L} for both $\bmalpha$ and $\bmalpha'$, and the last equality uses $\exp(h+h')=\exp(h)\exp(h')$ for all functions $h, h'\colon \ff^{m_X}\to \cc$.
Equation~\eqref{eq:g-average-L} gives
\begin{align}
&\norm{g_{\bmalpha,\bmalpha+\bmmu}-g_{\mathrm{ave}|\bmmu}}_A\nonumber\\
&\leq2^{-k}\sum_{\substack{\bmalpha'\in\ff^k\\\bmalpha'\neq\bmalpha}}
\norm{g_{\bmalpha,\bmalpha+\bmmu}}_A
\norm{1-\exp\qty(L_{\bmalpha'|\bmmu}-L_{\bmalpha|\bmmu})}_A\nonumber\\
&\leq2^{-k}\sum_{\substack{\bmalpha'\in\ff^k\\\bmalpha'\neq\bmalpha}}
\qty[e^{\norm{L_{\bmalpha'|\bmmu}-L_{\bmalpha|\bmmu}}_A}-1]\nonumber\\
&\leq2^{-k}\qty(2^k-1)\qty(e^{2\varepsilon}-1)\nonumber\\
&=\qty(1-2^{-k})\qty(e^{C_0ne^{-bd_Z}}-1),
\label{eq:g-average-exp-bound}
\end{align}
where the first inequality uses Eqs.~\eqref{eq:spectral-homogeneous}-\eqref{eq:spectral-submultiplicative}, the second inequality uses Eq.~\eqref{eq:g-spectral-bound} and Eq.~\eqref{eq:exp-norm-bound}, and the third inequality uses Eq.~\eqref{eq:L-difference-bound}.

The first equality in Eq.~\eqref{eq:g-average-L} and Eq.~\eqref{eq:g-spectral-bound} give
\begin{equation}
\label{eq:g-average-trivial-bound}
\norm{g_{\bmalpha,\bmalpha+\bmmu}-g_{\mathrm{ave}|\bmmu}}_A
\leq2\qty(1-2^{-k}).
\end{equation}
Let $t=ne^{-bd_Z}$.
If $t\leq1$, the right-hand side of Eq.~\eqref{eq:g-average-exp-bound} is at most $\qty(1-2^{-k})(e^{C_0}-1)t$, since $e^{C_0t}-1$ is convex in $t$ and vanishes at $t=0$.
If $t\geq1$, the right-hand side of Eq.~\eqref{eq:g-average-trivial-bound} is at most $2\qty(1-2^{-k})t$.
In both cases Eq.~\eqref{eq:average-difference-bound} holds with $C_1=\max\qty{e^{C_0}-1,2}$.
\end{proof}

The two factors in Eq.~\eqref{eq:average-difference-bound} have different origins.
The restriction in Eq.~\eqref{eq:L-difference-cluster-sum} gives Eq.~\eqref{eq:tuple-size-distance} and hence the factor $e^{-bd_Z}$.
The factor $n$ comes from the sum over the possible locations of a polymer in Eq.~\eqref{eq:polymer-sum-bound}.

Finally, we prove Theorem~\ref{thm:threshold theorem}.
\begin{proof}[Proof of Theorem~\ref{thm:threshold theorem}]
Fix an arbitrary $i\geq1$ and omit the index $i$ from the notation.
Then $n$, $k$, $d$, $d_Z$, $\Delta$, $\bm\theta$, $\delta$, and $F_e^{\mathrm{ML}}$ refer to the fixed code $\mathcal{Q}_i$.
The constants $\Delta_0$, $\delta_*$, and $k_0$ of Theorem~\ref{thm:threshold theorem} do not depend on $i$, and $\delta\leq\delta_*$ by Eq.~\eqref{eq:delta-theorem}.
Since $\Delta_0=c(r-1)\geq1$ for $r\geq2$ and $c\geq 1$, and $\Delta\leq\Delta_0$ by Eq.~\eqref{eq: delta bound}, Lemma~\ref{lem:kp-verified} and Proposition~\ref{prop:average-bound} can be applied with $\Delta$ replaced by $\Delta_0$.
Choose
\begin{equation}
\label{eq: b-def}
b\coloneqq\frac12\log\qty(\frac{\delta_{\mathrm{th}}}{\delta_*}),
\end{equation}
as in Eq.~\eqref{eq:constant-b}.
The value depends only on $\Delta_0$ and $\delta_*$ and does not depend on the code size.
By Eqs.~\eqref{eq:xb-def} and~\eqref{eq:constant-b},
\begin{align}
\label{eq:b-check}
x_b&=\delta e^{b+1}\nonumber\\
&\leq\delta_* e^{b+1}\nonumber\\
&=e\sqrt{\delta_*\cdot\delta_{\mathrm{th}}}\nonumber\\
&<e\delta_{\mathrm{th}}=\frac{1}{(e+1)\Delta_0+1},
\end{align}
where we use $\delta\leq \delta_*<\delta_{\mathrm{th}}$ and Eq.~\eqref{eq:delta-th-def}.
Thus Eq.~\eqref{eq:kp-condition} holds with $\Delta$ replaced by $\Delta_0$, and Proposition~\ref{prop:average-bound} together with Eq.~\eqref{eq:master-reduction} gives
\begin{equation}
\label{eq:thm-proof-step1}
1-F_e^{\mathrm{ML}}\leq\binom{q}{2}\qty(1-q^{-1})C_1 n e^{-bd_Z},
\end{equation}
where $q=2^k$.
Equation~\eqref{eq:b-check} gives $x_b<1/[(e+1)\Delta_0+1]$, so
\begin{equation}
C_0=\frac{2x_b}{1-e\Delta_0x_b}<\frac{2}{\Delta_0+1}\leq1
\end{equation}
by $\Delta_0\geq1$.
Hence $e^{C_0}-1\leq e-1<2$, and $C_1=\max\qty{e^{C_0}-1,2}=2$.
Substituting $C_1=2$ into Eq.~\eqref{eq:thm-proof-step1} gives
\begin{equation}
1-F_e^{\mathrm{ML}}\leq(q-1)^2n e^{-bd_Z},
\end{equation}
which is Eq.~\eqref{eq:codewise-bound} because $q=2^k$.
Using $q=2^k\leq2^{k_0}$ and $d\leq d_Z$ further gives
\begin{align}
1-F^{\mathrm{ML}}_{e}
&\leq\qty(2^{k_0}-1)^2ne^{-bd_Z}\nonumber\\
&\leq\qty(2^{k_0}-1)^2n e^{-bd},
\end{align}
where the prefactor $(2^{k_0}-1)^2$ is the constant $C$ of Eq.~\eqref{eq:constant-C}, which does not depend on $i$.
Thus Eq.~\eqref{eq:main-bound} holds for all $i\geq1$.
Finally, suppose that $d_i/\log n_i\to\infty$.
Since $b>0$ does not depend on $i$, for all sufficiently large $i$, we have $n_ie^{-bd_i}\to0$.
Equation~\eqref{eq:main-bound} therefore gives $1-F^{\mathrm{ML}}_{e,i}\to0$.
\end{proof}

\section{Conclusion}
\label{sec:conclusion}
We have proved a finite-size bound in the code-capacity setting for CSS QLDPC codes with a bounded number of logical qubits under coherent $Z$-rotation errors and the maximum-likelihood Pauli recovery.
More precisely, the entanglement infidelity is suppressed exponentially in the code distance up to a prefactor linear in the number of data qubits.
For code families whose distance grows superlogarithmically in the number of physical qubits, the finite-size bound gives a positive lower bound on the coherent-error threshold, one determined only by the sparsity of the codes.

To obtain the bound, we derived an exact expression for the entanglement fidelity under this recovery and bounded the infidelity using pairwise overlaps between distinct logical classes.
Expressing the transition amplitudes through the phases of the computational-basis states under the error, we obtained a representation of these overlaps through Fourier coefficients and used the restriction to distinct logical classes before taking absolute values. The resulting Fourier 1-norm bound retains the relevant interference and is controlled by the cluster expansion~\cite{Koteck__1986_polymer,friedli_velenik_2017_polymer}.
In the difference between the cluster series for $\bmalpha$ and $\bmalpha'$, clusters with zero total logical class cancel exactly, while the polymers in every surviving cluster have total size at least $d_Z$.
This gives the exponential factor in the finite-size bound.

The result concerns ideal syndrome measurement, the maximum-likelihood Pauli recovery, and QLDPC codes with a bounded number of logical qubits, which suggests the following directions for future work.
One of these directions is to prove the existence of a threshold under circuit-level coherent errors.
Another is to prove the existence of a threshold under more practical decoders, such as minimum-weight decoders~\cite{Dennis_2002,Higgott2022PyMatching,Gottesman2014constant,Takada2026Doubly}.
A third is to extend the result to high-rate QLDPC families with sublinear distance~\cite{TillichZemor2014Quantum,Leverrier2015Expander,Hastings2021Fiber,Bravyi_2024}, because the bound contains the factor $\qty(2^k-1)^2$ of Eq.~\eqref{eq:codewise-bound}.
This requires controlling the dependence on the number of logical qubits $k$ or avoiding the pairwise reduction.
More broadly, Fourier analysis that converts interfering amplitudes into a form controlled by the cluster expansion may offer a new statistical-mechanical perspective on coherent errors and contribute to other problems in many-body physics~\cite{zhang2025stabilitymixedstatephasesweak,Wild_2023,Yang2008Quantum}.

\begin{acknowledgments}
This work was supported by JST [Moonshot R\&D][Grant Number JPMJMS256E].
We acknowledge the use of ChatGPT 5.6 for additional verification of our results.
All ideas and proofs were developed by the authors.
\end{acknowledgments}

\bibliography{bibliography}

\end{document}